\newcommand{\R}{\mathbb{R}}
\newcommand{\X}{\mathbb{X}}
\newcommand{\T}{\mathbb{T}}
\newcommand{\bmat}[1]{\begin{bmatrix} #1\end{bmatrix}}
\newcommand{\col}{\mathrm{col}}
\newcommand{\mbf}[1]{\mathbf{ #1}}
\newcommand{\fourpi}[4]{\ensuremath{\mathfrak{\P}\hspace{-0.8ex}\left[\footnotesize\begin{array}{c|c}
#1&#2\\\hline#3 & \{#4\}
\end{array}\right]}}
\newcommand{\threepi}[1]{\mathfrak{\P}_{\{#1\}}}
\definecolor{customred}{HTML}{FBE5D6} 
\definecolor{customblue}{HTML}{DAE3F3} 
\definecolor{customgreen}{HTML}{D9F2D0} 
\definecolor{errorband}{HTML}{FF8080} 
\newcommand{\blackline}{\raisebox{2pt}{\tikz{\draw[-,black!40!black,solid,line width = 0.9pt](0,0) -- (3mm,0);}}}
\newcommand{\blacklinedashed}{\raisebox{2pt}{\tikz{\draw[-,black!40!black,dashed,line width = 0.9pt](0,0) -- (3.0mm,0);}}}
\newcommand{\blueline}{\raisebox{2pt}{\tikz{\draw[-,black!40!blue,solid,line width = 0.9pt](0,0) -- (3mm,0);}}}
\def\BibTeX{{\rm B\kern-.05em{\sc i\kern-.025em b}\kern-.08em
    T\kern-.1667em\lower.7ex\hbox{E}\kern-.125emX}}
\theoremstyle{definition}
\definecolor{sensorgray}{HTML}{A6A6A6}
\begin{document}
\begin{frontmatter}

\title{A Digital Twin of Evaporative Thermo-Fluidic
Process in Fixation Unit of DoD Inkjet Printers\thanksref{footnoteinfo}} 
% Title, preferably not more than 10 words.

\thanks[footnoteinfo]{This work was supported in part by Canon Production Printing B.V.,5914 HH Venlo, The Netherlands.}

\author[First]{Samarth Toolhally} 
\author[Second]{Joeri Roelofs} 
\author[First]{Siep Weiland}
\author[First]{Amritam Das}

\address[First]{Eindhoven University of Technology, Department of Electrical Engineering, Control Systems group, P.O. Box 513, 5600 MB Eindhoven, The
Netherlands (e-mail: s.toolhally@tue.nl, s.weiland@tue.nl, am.das@tue.nl).}
\address[Second]{Canon Production Printing B.V., Van der Grintenstraat 10, 5914 HH Venlo,
The Netherlands (e-mail: 
joeri.roelofs@cpp.canon)}

\begin{abstract}            
In inkjet printing, optimal paper moisture is essential for high print quality. Commercial printers achieve this through hot-air impingement in a fixation unit, whose drying performance is crucial to overall print quality. This paper presents a modular digital twin of the fixation unit that models the thermo-fluidic drying process and adaptively monitors its spatio-temporal performance. The core novelty lies in formulating the digital twin as an infinite-dimensional state estimator that infers spatio-temporal fixation states from limited sensor data while remaining optimally robust to external disturbances. Specifically, modularity is achieved by deriving a graph-theoretic model in which each node is governed by PDEs representing the thermo-fluidic processes within individual sections of the fixation unit. Evaporation is modeled as a nonlinear boundary effect coupled with each node’s dynamics via Linear Fractional Representation. Using the Partial Integral Equation (PIE) framework, we develop a unified approach for stability, input–output analysis, numerical simulation, and rapid prototyping of the fixation process, validated with operational data from a commercial inkjet printer. Based on the validated model, an $\mathcal{H}_{\infty}$-optimal Luenberger state estimator is synthesized to estimate the fixation unit’s thermal states from available sensor data. Together, the graph-theoretic model and optimal estimator constitute the digital twin of a commercial printer’s fixation unit, enabling real-time monitoring of spatio-temporal thermal effects on paper sheets capabilities otherwise unattainable in traditional printing processes.
\end{abstract}

\begin{keyword}
Digital twin, PDEs, State Estimation, Thermal Systems
\end{keyword}

\end{frontmatter} 
%===============================================================================

\section{Introduction}
% \begin{itemize}
%     \item Train, test and validation data

% \item Spatial normalisation will resolve the time scale separation problem

% \item Evaporation model does not depend on the thickness and type of media

% \end{itemize}

Over the past several decades, the printing industry has undergone a significant transformation in technology, scale, and functionality. Historically, analogue, offset, and electrostatic presses dominated the field, serving large-volume publishing, packaging, and commercial print operations. These early production printers relied heavily on mechanical processes \citep{crompton2003printing}, required extensive setup times, and achieved cost efficiency only through very high-volume runs.
The advent of digital imaging, computer-to-plate workflows, and continuous advancements in print head technology, ink formulations, and finishing systems have fundamentally reshaped this landscape. Modern production printing systems are increasingly agile, enabling shorter print runs, faster turnaround times, and greater design flexibility. At the same time, growing emphasis on sustainability and environmental compliance has driven innovation toward energy-efficient processes, reduced material waste, and the adoption of eco-friendly inks and substrates. Today, digital production printing extends beyond traditional paper media, encompassing applications in polymers, metals, textiles, wood, and even circuit boards \citep{10.1145/1064830.1064860}.From an economic perspective, the global production printer market continues to demonstrate steady growth. Although estimates vary, most market analyses project compound annual growth rates of approximately 4–7\%  over the next 5–10 years. As of 2024, the global market is valued between USD 6–9 billion, with projections suggesting an expansion to USD 12–18 billion by the early to mid-2030s \citep{BusinessResearchInsights2024}. In the context of this market growth and the push for sustainability, drop-on-demand (DoD) inkjet technology has emerged as a core component of modern digital printing systems. Unlike older printers, DoD printing is explicitly governed by the user's demand. 

\subsection{The need of fixation in  DoD inkjet printers}
A typical DoD printer comprises two primary units: the jetting unit and the fixation unit. Based on the user-demanded image, the jetting unit precisely deposits ink droplets onto the substrate through an array of nozzles, where temperature, droplet volume, and ejection velocity are tightly controlled according to the user-defined image data. The fixation unit subsequently dries the printed medium to achieve an optimal temperature-moisture balance for desired print-quality. This process typically involves a heated moving conveyor combined with hot-air impingement to facilitate moisture evaporation, as illustrated in Figure \ref{fixation process}. 
\begin{figure}[H]
    \centering
    \includegraphics[scale=0.3]{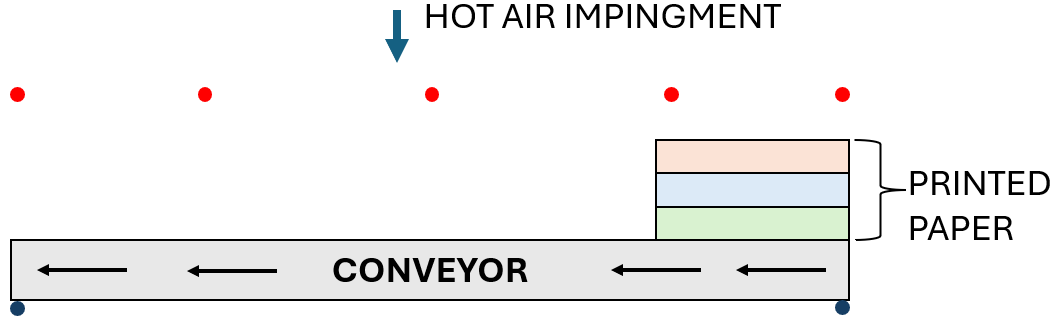}
    \caption{Fixation process. (\protect\tikz \protect\fill[blue] (0,0) circle (2pt);) represents the conveyor temperature sensors.  (\protect\tikz \protect\fill[red] (0,0) circle (2pt);) represents the paper temperature sensor.}
    \label{fixation process}
\end{figure}

\subsection{Challenges in current fixation process}
Although temperature sensors monitor the conveyor, direct measurement of the residual temperature and moisture content in the paper sheets during fixation is not possible because it is impossible for the sensors to be directly in contact with the paper sheets. Furthermore, the high throughput of commercial printing requires fast movement of the paper sheets during fixation, making alternative sensing such as thermal imaging infeasible. However, these spatio-temporal thermo-fluidic effects on the paper sheets are crucial to assess the thermo-mechanical properties of the printed paper sheets as well as the integration of ink to the medium, two key performance indices for the print quality. The state-of-the-art practice in the printing industry largely focuses on design improvements when it comes to monitor and compensate for the thermo-fluidic effects on the paper sheets during fixation, which currently poses as a major quality-limiting factor in DoD inkjet printing. A system-theoretic perspective to model and control thermo-fluidic effects is sorely missing in DoD inkjet technology \citep{9465747}. 
A limited number of efforts for model-based feedforward compensation for these thermo-fluidic effects have been made recently. However, they have been futile because of the absence of accurate spatio-temporal estimation of these effects over the course of fixation. Since Partial Differential Equations (PDEs) are used to model these thermo-fluidic phenomena \citep{pletcher2012computational}, the main computational challenge arises from their infinite dimensional nature. Standard synthesis approaches based on early lumping \citep{SHANG2000533,10.1007/978-3-0348-8849-3_12} and late lumping  suffer from the same issue: an inaccurate assessment of robustness and stability certificates, which makes them unreliable for commercial implementation. In \cite{9029595, 2024arXiv241101793B}, a state estimation strategy on infinite dimensions is provided which circumvents the issues related to lumping by converting the PDEs to an equivalent set of Partial Integral Equations (PIEs). This synthesis strategy does not involve any approximation and can be solved by convex optimization (also see \cite{PEET2019132}, \cite{sachin}). However, it is important to note that, beyond a proof of concept, the PIE framework has never been incorporated into industrial products.  

\subsection{Contribution}
\textcolor{black}{The fixation unit in commercial DoD inkjet printers must be tuned and evaluated across a wide range of operational scenarios, yet performing repeated physical experiments is both time‑consuming and costly. A physics‑based virtual replica of the fixation process enables the prediction of drying behaviour and the identification of optimal fixation parameters, thereby allowing system-level optimisation in a cost-efficient manner without the need for repeated physical experimentation. A digital twin is such a software-based replica that represents the underlying thermo-fluidic processes in the fixation unit with a multitude of functionalities that are relevant and are customizable based on application-specific requirements.}

\textcolor{black}{This work develops a modular, physics‑based digital twin for simulation, rapid prototyping, and in‑operation monitoring of the thermo‑fluidic dynamics of paper sheets during the fixation process. To support this modularity, the physical system is represented using a graph-theoretic structure in which each subsystem is modelled as a node governed by its own PDEs. The purpose of this representation is to provide a scalable organisational framework that cleanly encodes interconnections, boundary interactions, and heterogeneous material layers. This is particularly relevant in industrial printing, where paper may consist of multiple composite layers with different thermo‑fluidic characteristics. The graph-based formulation, therefore, enables straightforward adaptation to different media and fixation-unit configurations, ensuring that the digital twin remains extensible and industry relevant.} 

 To overcome the limitations of existing methods, which struggle with accurate predictions due to the inability to precisely model spatio-temporal thermo-fluidic effects, this study employs PIEs, a class of equations derived from PDEs through a variable transformation. Unlike traditional approaches, PIEs do not require boundary conditions, and their formulation allows the use of Linear Matrix Inequality methods that are widely-used for Ordinary Differential Equations. Furthermore, the digital twin incorporates an $\mathcal{H}_{\infty}$-optimal state estimation technique to monitor infinite dimensional states while accounting for the sparse sensor data and unknown external perturbations. By integrating a state estimator with a modular physical model, validated with machine data, this work provides the very first evidence that the PIE framework is implementable in an industrial setting and offers a computational advantage that is otherwise impossible to achieve as traditional computational approaches are limited by the choice of approximation scheme employed to discretize the original infinite dimensional behavior. In this way, the paper offers a pathway to exploit new computational techniques in the realm of distributed parameter systems to significantly improve the quality of DoD inkjet printing processes by more accurate modeling and estimation of the thermo-mechanical interactions during fixation.
The remainder of this paper is organized as follows: Section 2 introduces preliminaries and notation. Section 3 presents thermo-fluidic process modeling and compares simulation results with sensor data. Section \ref{estimator section} details the synthesis of the $\mathcal{H}_{\infty}$ estimator. Section 5 concludes with a summary and future work.
\section{Preliminaries}

Due to confidentiality reasons, in this paper, all the plots are normalized by subtracting the data points with a fixed nominal value. 
\textcolor{black}{Table \ref{physicalconst} shows all the physical constants and their units used in this paper.}
\begin{table}[H]
\centering
\captionsetup{labelfont={color=black}, textfont={color=black}}
\caption{\textcolor{black}{Physical constants and their units}}\label{physicalconst}
{\color{black}
\begin{tabular}{ccc}
\toprule
Symbol & Description & Units  \\
\midrule
$\kappa$ & Thermal conductivity of the material & $W/(m\cdot K)$ \\
$\rho$   & Density of the material & $kg/m^3$\\
$\chi$   & Specific heat capacity of the material & $J/(kg\cdot K)$ \\
$h$      & Heat transfer coefficient & $W/(m^2 K)$\\
$g$      & Moisture transfer coefficient & $m/s$\\
$\delta$ & Moisture diffusion coefficient & $m^2/s$\\
\bottomrule
\end{tabular}
}
\end{table}
 The space of functions $f: A\rightarrow B$ is denoted by $(B)^{A}$, i.e., $f \in (B)^{A}$. The array $\bmat{a\\b}$ is often written as $\text{col}(a, b)$. $\partial^q_s f$ denotes the q\textsuperscript{th}-order partial derivative of the function \( f \) with respect to \( s \), i.e. \( \frac{\partial^q f}{\partial s^q} \).

Given bounded matrix $P\in\R^{n\times m}$, and bounded matrix-valued polynomials $\mbf{Q_1}:[\alpha, \beta] \to  \R^{n\times p}$, $\mbf{Q_2}:[\alpha, \beta] \to  \R^{q\times m}$, $\mbf{R_0}$,$\mbf{R_1}$, and $\mbf{R_2}: [\alpha, \beta] \to \R^{q\times p} $, a Partial Integral (PI) operator  denoted by $\fourpi{P}{\mbf{Q_1}}{\mbf{Q_2}}{\mbf{R_i}}$ is defined for all $s \in [\alpha, \beta]$, $u\in \R^m$ and $\mbf v: [\alpha,\beta] \to \R^p$ according to
\begin{align}\label{eq:4pi}
\left(\fourpi{P}{\mbf{Q_1}}{\mbf{Q_2}}{\mbf{R_i}}\bmat{u\\\mbf{v}}\right)(s)
%&=\left(\fourpi{P}{Q_1}{Q_2}{R_i}\bmat{u\\\mbf{v}}\right)(s)\notag\\
& := \bmat{Pu + \int\limits_{\alpha}^{\beta}\mbf{Q_1}(r)\mbf{v}(r)\mathrm{d}r\\\mbf{Q_2}(s)u+\left(\threepi{\mbf{R_i}}\mbf{v}\right) (s)}.
\end{align}
Here, $\threepi{\mbf{R_i}}$ is defined as follows
\begin{align}
&\left(\threepi{\mbf{R_i}}\mbf v\right)(s):=\label{eq:3pi}\\
& \mbf{R_0}(s) \mbf v(s) +\int\limits_{\alpha}^s \mbf{R_1}(s,r)\mbf v(r)\mathrm{d} r+\int\limits_s^{\beta}\mbf{R_2}(s,r)\mbf v(r)\mathrm{d} r.\notag
\end{align}

% \section{Problem Definition and contribution} \label{problem definition}
% \input{Chapters/Problem definition}

% \section{State of the art and contribution} \label{state of the art}
% \input{Chapters/State of the art and contribution}
% \section{gt new}
% \input{Chapters/gtnew2}
\section{Modeling the Thermo-Fluidic Process in Fixation} \label{develop digital twin}
% This section describes all the definitions and elements needed to develop the digital twin. All the signal interactions in the thermo-fluidic process are systematically defined using graph theoretic definitions. The input-output signals interacting with the non-linearities during the process are defined and represented in the Linear Fractional Representation (LFR) structure. Once all the signal interactions are known, implementing the digital twin using object-oriented programming, 
 % simulation, and synthesis of the $\mathcal{H}_{\infty}$ optimal state estimator is discussed. 
% This section describes all the definitions and elements needed to develop the digital twin. All the signal interactions in the thermo-fluidic process are systematically defined using graph theoretic definitions. The input-output signals interacting with the non-linearities during the process are defined and represented in the Linear Fractional Representation (LFR) structure. Once all the signal interactions are known, implementing the digital twin using object-oriented programming, 
 % simulation, and synthesis of the $\mathcal{H}_{\infty}$ optimal state estimator is discussed. 
\subsection{Graph theoretic representation}

Following \citep{phdthesis}, thermo-fluidic processes in sheets of paper is defined as a finite graph as follows.

\begin{defn}\label{def1}(Fixation Process)
The fixation process is a finite graph
\begin{align}
\label{graph}
 \mathcal{G} = (\mathbb{N}, A, \mathbb{E}),   
\end{align}

where 
\(\mathbb{N} = \{\mathcal{N}_1, \dots, \mathcal{N}_m\}\) is the set of nodes, 
\(A \in \{0,1\}^{m \times m}\) is the adjacency matrix with
\[
[A]_{i,j} = \begin{cases}
1, & \text{if $\mathcal{N}_i$ is connected to $\mathcal{N}_j$,} \\
0, & \text{otherwise,}
\end{cases}
\]
and \(\mathbb{E} = \{\mathcal{E}_{i,j} \mid A_{i,j} = 1\}\) is the set of edges.

\subsubsection{Node Structure:}

Each node \(\mathcal{N}_i \in \mathbb{N}\) is defined as
\begin{align}
\label{node}
\mathcal{N}_i = (\mathbb{X}_i, \mathbb{X}_i^{\mathrm{bc}}, \mathfrak{S}_i, \mathfrak{P}_i, \mathfrak{P}_i^{\mathrm{bc}}, \mathfrak{D}_i),
\end{align}
where:

\begin{itemize}
    \item \(\mathbb{X}_i = [s_{\iota,i}, s_{\upsilon,i}] \subset \mathbb{R}\) is the spatial domain of node \(\mathcal{N}_i\), and \(\mathbb{X}_i^{\mathrm{bc}}\) denotes its boundary exposed to external conditions.
    \item The signal space \(\mathfrak{S}_i = \mathfrak{S}_{p,i} \times \mathfrak{S}_{o,i}\) contains the internal states \(\mathfrak{S}_{p,i} = \mathbb{R}^{n_{\mathbf{x},i}}\) and external signals
    \(\mathfrak{S}_{o,i} = \mathbb{R}^{n_{d,i}+n_{w,i}+n_{y,i}+n_{z,i}+n_{l,i}+n_{p,i}+n_{q,i}}\), summarized in Table~\ref{tab:node-signals}.
    \item The node behavior is captured by a subspace
    \[
        \mathfrak{P}_i \subset (\mathfrak{S}_{p,i})^{\mathbb{X}_i \times \mathbb{T}} \times (\mathfrak{S}_{o,i})^{\mathbb{T}},
    \]
    representing the relation among all signals.
    \item Boundary behavior is specified by \(\mathfrak{P}_i^{\mathrm{bc}} \subset \mathfrak{P}_i\), restricting the node behavior at external boundaries \(\mathbb{X}_i^{\mathrm{bc}}\).
    \item Nonlinearity is modeled by introducing two latent signals $p$ and $q$ and establishing an operator $\Delta: \mathbb{R}^{n_{p,i}} \rightarrow \mathbb{R}^{n_{q,i}}$ such that
    \[
        q_i = \Delta(p_i), \quad \mathfrak{D}_i \subset (\mathbb{R}^{n_{p,i} + n_{q,i}})^\mathbb{T}.
    \]
    
\end{itemize}

\begin{table*}[h]
\centering
\caption{Node signals related to Definition \ref{def1}. Corresponding to the fixation process, physical meanings of all the variables are provided here.}
\label{tab:node-signals}
\begin{tabular}{cccc}
\toprule
Signal & Type & Domain & Description \\
\midrule
$\mathbf{x}_{i,1}$ & Temperature state ($^\circ$C) & $\mathbb{X}_i \times \mathbb{T} \rightarrow \R^{n_{\mbf x,i}}$ & Multivariate function \\
$\mathbf{x}_{i,2}$ & Moisture state ($\mathrm{g/m^2}$) & $\mathbb{X}_i \times \mathbb{T} \rightarrow \R^{n_{\mbf x,i}}$ & Multivariate function \\
$d_i$ & Constant hot air impingement input ($^\circ$C)& $\mathbb{T} \rightarrow \R^{n_{d,i}}$  & Know input \\
$w_i$ & Unknown variation in hot air impingement input ($^\circ$C)& $\mathbb{T} \rightarrow\R^{n_{w,i}}$ & Exogenous unknown input \\
$y_i$ & Thermocouple measurement ($^\circ$C)& $\mathbb{T} \rightarrow \R^{n_{y,i}}$ & Measured output \\
$z_i$ & \textcolor{black}{Estimated} output($^\circ$C) & $\mathbb{T} \rightarrow \R^{n_{z,i}}$ & Performance index \\
$p_i$ & Temperature ($^\circ$C) and moisture (kg)  & $\mathbb{T} \rightarrow \R^{n_{p,i}}$  & Node nonlinear interactions \\
$q_i$ & Evaporation enthalpy (W/s) and mass flux (kg/s) &  $\mathbb{T} \rightarrow \R^{n_{q,i}}$ & Node nonlinear interactions \\
\bottomrule
\end{tabular}
\end{table*}

\subsubsection{Edge Structure:}

Each edge \(\mathcal{E}_{i,j} \in \mathbb{E}\) is defined as
\begin{align}
\label{edge}
   \mathcal{E}_{i,j} = (\mathbb{X}_{i,j}^I, \mathfrak{C}_{i,j}^I, \mathfrak{E}_{i,j}), 
\end{align}

where:
\begin{itemize}
    \item \(\mathbb{X}_{i,j}^I \subseteq \mathbb{X}_i \cap \mathbb{X}_j\) is the interface boundary between nodes \(\mathcal{N}_i\) and \(\mathcal{N}_j\).
    \item \(\mathfrak{C}_{i,j}^I = \mathbb{R}^{n_{l_{i,j}} + n_{r_{i,j}}} \subset \mathfrak{S}_i \times \mathfrak{S}_j\) is the space of interconnection signals.
    \item The interconnection subspace \(\mathfrak{E}_{i,j} \subset \mathfrak{P}_i^{\mathrm{bc}} \times \mathfrak{P}_j^{\mathrm{bc}} \) relates input and output signals across the interface. Since not all the states from one node to the other needs to be interconnected. Thus, for interconnected signals $l_i, l_j$
\[
l_i = E_{i,j} l_j,
\]
    where \(E_{i,j}\) is a constant matrix of appropriate dimension. If nodes have unequal states, additional boundary conditions are applied to make the model consistent.
\end{itemize}
\end{defn}

\textcolor{black}{
\begin{rem}
Definition \ref{def1} provides the blueprint on how to implement the digital twin in software. In fact, following principles of object oriented programming, one may create a node class and edge class with attributes defined according to each item in the definition \eqref{node}-\eqref{edge}. For the fixation unit's model, there are six nodes and each of them are defined using six individual node objects with all the attributes specified. Similar approach is followed for declaring individual edges.
\end{rem}
}
\subsection{Graph-Theoretic Model of the Fixation Unit}
In commercial printing, the drying of paper or cardboard is governed by the simultaneous heat and moisture diffusion through composite materials \citep{holik2013handbook}. Using the graph-theoretic terminology proposed in Definition \ref{def1}, the fixation unit (c.f. Fig. \ref{fixation process}) is depicted according to Fig. \ref{gtpapernew}. 
\begin{figure}[H]
    \centering
    \includegraphics[scale=0.55]{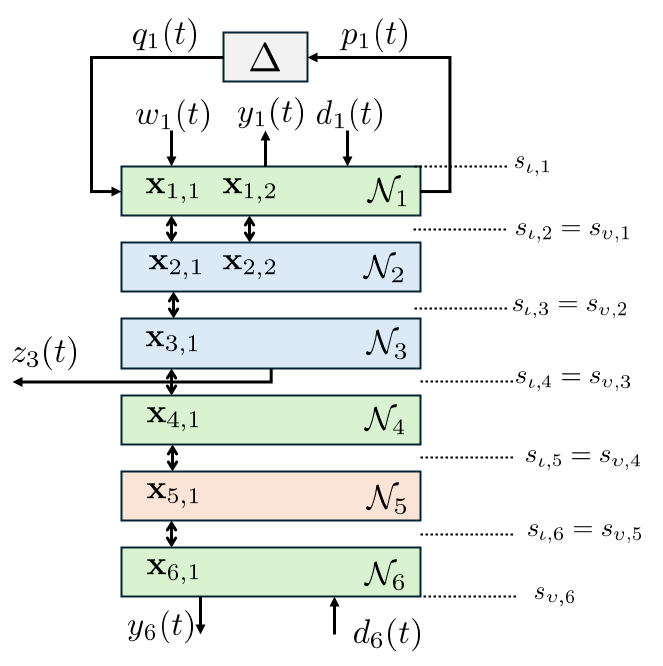}
    \caption{\textcolor{black}{Graph theoretic representation of the paper conveyor system. The nodes are color-coded where \colorbox{customblue}{\phantom{X}}, \colorbox{customred}{\phantom{X}}, and \colorbox{customgreen}{\phantom{X}} denote the paper layers, the conveyor belt, and the surrounding air layers, respectively. The arrows show the bi-directional interconnection between the same kind of states The signal $w_{1}(t)$ represents  the unknown disturbance acting on the system, while $d_{1}(t)$ and $d_{6}(t)$ are known boundary inputs. The operator $\Delta$ captures the nonlinear evaporation effects, coupling the temperature and moisture dynamics.}}
    \label{gtpapernew}
\end{figure}
The paper sheet is modeled as a non-homogeneous composite with layers $\mathcal{N}_2$ and $\mathcal{N}_3$, while node $\mathcal{N}_5$ represents the conveyor, where only temperature is defined. The dynamics at each spatially distributed node are governed by Fick's law, subject to interface and external boundary conditions. Thermal contact at interfaces may be either perfect or imperfect. Specifically, the interaction between the paper and the conveyor is modeled with imperfect thermal contact due to energy loss, whereas the thermal contact between the paper layers is assumed to be perfect. External boundaries may also experience disturbances that affect the system.

\textcolor{black}{
\begin{rem}
Only the transversal spatial coordinate is considered because longitudinal variations are significantly smaller during fixation. Moisture transport is dominated by evaporation at the exposed surfaces, making the through-thickness direction the most influential for both drying and thermal transport, while longitudinal moisture gradients remain negligible. Additionally, the conveyor belt has low thermal conductivity and a small cross-sectional area, resulting in minimal temperature variation along its length over the short fixation time. Consequently, longitudinal variations can be neglected for computational efficiency without sacrificing physical accuracy.
\end{rem}
}

Nodes are interconnected according to the adjacency matrix $A_{ij} = 1$, with edges $\mathcal{E}_{ij}$ representing state interactions through the matrices $E_{ij}$. Additional nodes $\mathcal{N}_1, \mathcal{N}_4, \mathcal{N}_6$ represent air layers above the paper, between the paper and the conveyor, and below the conveyor, respectively, as illustrated in Figure \ref{gtpapernew}. Between the two nodes associated with the paper sheet, $\mathcal{N}_2$ includes both temperature and moisture as state variables, while $\mathcal{N}_3$ only considers temperature as state variable. All the nodes are spatially distributed where $i^{\text{th}}$ node lies within the spatial domain $s_i \in [s_{\iota,i}, s_{\upsilon,i}]$ for $i \in \{1, 2, \dots, 6\}$.

The set of edges is given by $\mathcal{E}\in \{\mathcal{E}_{1,2}, \mathcal{E}_{2,1}, \mathcal{E}_{2,3},\mathcal{E}_{3,2},\mathcal{E}_{3,4},\\ \mathcal{E}_{4,3},\mathcal{E}_{4,5},\mathcal{E}_{5,4},\mathcal{E}_{5,6},\mathcal{E}_{6,5}\}$.
The corresponding adjacency matrix $A$ and the matrix $E$ defining the interconnections among all states are defined as follows, respectively. 
\begin{equation}
A = 
\begin{bmatrix}
  0 & 1 & 0 & 0 & 0 & 0\\
  0 & 0 & 1 & 0 & 0 & 0\\
  1 & 0 & 0 & 1 & 0 & 0\\
  0 & 1 & 0 & 0 & 1 & 0\\
  0 & 0 & 1 & 0 & 0 & 1\\
  0 & 0 & 0 & 1 & 0 & 0
\end{bmatrix}, E = 
\begin{bmatrix}
  0 & 0 & 1 & 0 & 0 & 0 & 0 & 0\\
  0 & 0 & 0 & 1 & 0 & 0 & 0 & 0\\
  1 & 0 & 0 & 0 & 1 & 0 & 0 & 0\\
  0 & 1 & 0 & 0 & 0 & 0 & 0 & 0\\
  0 & 0 & 1 & 0 & 0 & 1 & 0 & 0\\
  0 & 0 & 0 & 0 & 1 & 0 & 1 & 0\\
  0 & 0 & 0 & 0 & 0 & 1 & 0 & 1\\
  0 & 0 & 0 & 0 & 0 & 0 & 1 & 0
\end{bmatrix}.
\end{equation}

The dynamic equations for the system are given by 
\begin{equation}
\label{dynamicmain}
    \begin{split}   \text{$\mathcal{N}_{1}$}&:\left[\begin{array}{l}
\dot{\mathbf{x}}_{1,1}\left(s_1, t\right) \\
\dot{\mathbf{x}}_{1,2}\left(s_1, t\right)
\end{array}\right]=D_1\left[\begin{array}{l}
\partial_{s_1}^2 \mathbf{x}_{1,1}\left(s_1, t\right) \\
\partial_{s_1}^2 \mathbf{x}_{1,2}\left(s_1, t\right)
\end{array}\right] +B_{\Delta}q_1(t),\\
\text{$\mathcal{N}_{2}$}&:\left[\begin{array}{l}
\dot{\mathbf{x}}_{2,1}\left(s_2, t\right) \\
\dot{\mathbf{x}}_{2,2}\left(s_2, t\right)
\end{array}\right]=D_2\left[\begin{array}{l}
\partial_{s_2}^2 \mathbf{x}_{2,1}\left(s_2, t\right) \\
\partial_{s_2}^2 \mathbf{x}_{2,2}\left(s_2, t\right)
\end{array}\right],\\
\text{$\mathcal{N}_{3}$}&:\dot{\mathbf{x}}_{3,1}\left(s_3, t\right) = D_{3}\partial_{s_3}^2 \mathbf{x}_{3,1}\left(s_3, t\right), \\
\text{$\mathcal{N}_{4}$}&:\dot{\mathbf{x}}_{4,1}\left(s_4, t\right) = D_{4}\partial_{s_4}^2 \mathbf{x}_{4}\left(s_4, t\right), \\
\text{$\mathcal{N}_{5}$}&:\dot{\mathbf{x}}_{5,1}\left(s_5, t\right) = D_{5}\partial_{s_5}^2 \mathbf{x}_{5,1}\left(s_5, t\right), \\
\text{$\mathcal{N}_{6}$}&:\dot{\mathbf{x}}_{6,1}\left(s_6, t\right) = D_{6}\partial_{s_6}^2 \mathbf{x}_{6,1}\left(s_6, t\right).
    \end{split}
\end{equation}

Here, $D_1 = \mathrm{diag}(\frac{\kappa_{1}}{\rho_1 \chi_{1}}, \delta_1)$, $D_2 = \mathrm{diag}(\frac{\kappa_{p}}{\rho_p \chi_{p}}, \delta_p)$,  $D_{3} = \frac{\kappa_{p}}{\rho_p \chi_{p}}$, $D_{4} = \frac{\kappa_{4}}{\rho_4 \chi_{4}}$, $D_{5} = \frac{\kappa_{b}}{\rho_b \chi_{b}}$, $D_{6} = \frac{\kappa_{6}}{\rho_6 \chi_{6}}$ are the thermal diffusion coefficients given by $\frac{\kappa_p}{ \rho_p \chi_{p}}$, \\
The measured output is given by
\begin{equation}
    y_1(t) = \mathbf{x}_{1,1}(s_{\iota,1},t), \quad y_6(t)=\mathbf{x}_{6,1}({s_{\upsilon,6},t}).
\end{equation}
The observed output is given by 
\begin{equation}
    z_3(t) = \mathbf{x}_{3,1}(s_{\upsilon,3},t).
\end{equation}
\textcolor{black}{which corresponds to the temperature at the bottom surface of the paper. The estimation of $z_{3}(t)$ is chosen because it is the only thermally significant internal quantity that cannot be measured during fixation, yet it provides the missing boundary information needed to reconstruct the full through-thickness temperature profile. This profile is essential for predicting evaporation, moisture migration, and overall drying efficiency. Furthermore, after the first fixation pass, the paper is flipped for the second-side print, at which point the bottom-surface temperature becomes directly measurable. This makes $z_{3}(t)$ a physically verifiable internal variable, enabling validation of both the digital twin and the proposed state-estimation framework.}\\

The corresponding boundary conditions for all the temperature states are given by

\begin{equation}\label{BC2}
\begin{split}
        -\kappa_1 \partial_{s_1} \mathbf{x}_{1,1}(s_{\iota,1},t)+h_t(\mathbf{x}_{1,1}(s_{\iota,1},t)-T_{hai}-w_1(t))&=0,\\
        -\kappa_1\partial_{s_1} \mathbf{x}_{1,1}(s_{\upsilon,1},t)+ \kappa_p \partial_{s_2} \mathbf{x}_{2,1}(s_{\iota,2},t)&=0,\\
        -\mathbf{x}_{1,1}(s_{\upsilon,1},t)+\mathbf{x}_{2,1}(s_{\iota,2},t)&= 0,\\ 
        -\kappa_p\partial_{s_2} \mathbf{x}_{2,1}(s_{\upsilon,2},t)+ \kappa_p \partial_{s_3} \mathbf{x}_{3,1}(s_{\iota,3},t)&=0,\\
        -\mathbf{x}_{2,1}(s_{\upsilon,2},t)+\mathbf{x}_{3,1}(s_{\iota,3},t)&= 0,\\
        -\kappa_p\partial_{s_3} \mathbf{x}_{3,1}(s_{\upsilon,3},t)+ \kappa_4 \partial_{s_4} \mathbf{x}_{4,1}(s_{\iota,4},t)&=0,\\
        -\mathbf{x}_{3,1}(s_{\upsilon,3},t)+\mathbf{x}_{4,1}(s_{\iota,4},t)&= 0,\\
                  -\kappa_4\partial_{s_4} \mathbf{x}_{4,1}(s_{\upsilon,4},t)+h_{pb}(\mathbf{x}_{4,1}(s_{\upsilon,4},t)-\mathbf{x}_{5,1}(s_{\iota,5},t))&=0,\\
        -\kappa_b \partial_{s_5}\mathbf{x}_{5,1}(s_{\iota,5},t)+h_{pb}(\mathbf{x}_{4,1}(s_{\upsilon,4},t)-\mathbf{x}_{5,1}(s_{\iota,5},t))&=0,\\
        -\kappa_b\partial_{s_5} \mathbf{x}_{5,1}(s_{\upsilon,5},t)+ \kappa_6 \partial_{s_6} \mathbf{x}_{6,1}(s_{\iota,6},t)&=0,\\
        -\mathbf{x}_{5,1}(s_{\upsilon,5},t)+\mathbf{x}_{6,1}(s_{\iota,6},t)&= 0,\\
        \kappa_6 \partial_{s_6} \mathbf{x}_{6,1}(s_{\upsilon,6},t)+h_b(\mathbf{x}_{6,1}(s_{\upsilon,6},t)-T_{a})&=0.
\end{split}
\end{equation}\label{bctemp}

The boundary conditions for the moisture states are given by 
\begin{equation}\label{BC2m}
\begin{split}
        -\delta_1 \partial_{s_1} \mathbf{x}_{1,2}(s_{\iota,1},t)+g(\mathbf{x}_{1,2}(s_{\iota,1},t)-m_a)=0,\\
        -\delta_1\partial_{s_1} \mathbf{x}_{1,2}(s_{\upsilon,1},t)+ \delta_p \partial_{s_2} \mathbf{x}_{2,2}(s_{\iota,2},t)=0,\\
        -\mathbf{x}_{1,2}(s_{\upsilon,1},t)+\mathbf{x}_{2,2}(s_{\iota,2},t)= 0,\\
        \partial_{s_2} \mathbf{x}_{2,2}(s_{\upsilon,2},t) = 0.
\end{split}
\end{equation}\label{bcmoist}
% The measured output is given by
% \begin{equation}
%     y_1(t) = \mathbf{x}_{1,1}(s_{\iota,1},t), \quad y_6(t)=\mathbf{x}_{6,1}({s_{\upsilon,6},t})
% \end{equation}
% The observed output is given by 
% \begin{equation}
%     z_3(t) = \mathbf{x}_{3,1}(s_{\upsilon,3},t)
% \end{equation}
\textcolor{black}{The moisture content in the printed paper disappears through evaporation in the fixation process. The mass flux of this evaporation depends on both the local paper temperature and moisture content and is captured by the static nonlinear function $\Delta_{1}(p(t))$. Likewise, the associated enthalpy of evaporation depends on the temperature and is represented by the nonlinear term $\Delta_{2}(p(t))$. Although the linear diffusion operators for temperature and moisture are block-diagonal, these nonlinear boundary terms introduce the necessary thermo–fluidic coupling. Temperature influences moisture loss through $\Delta_{1}$, and the resulting latent heat effects feed back into the heat equation through $\Delta_{2}$. Introducing coupling directly into the diffusion matrix would require non-diagonal coefficients with units inconsistent with $m^{2}/s$, rendering such a formulation physically incorrect. Modeling the interaction through $\Delta_{1}$ and $\Delta_{2}$ therefore yields a dimensionally consistent and physically accurate representation of the coupled heat–moisture dynamics.}

\textcolor{black}{Based on this coupled thermo–fluidic model, the proposed digital twin reconstructs the missing temperature and moisture profiles inside the paper. The diffusion dynamics propagate the available thermal information across the paper thickness, enabling the internal temperature distribution to be inferred even though it cannot be directly measured during fixation. This reconstructed temperature field is then used in the evaporation relations $\Delta_{1}$ and $\Delta_{2}$ to compute the local moisture loss and the associated latent-heat exchange, resulting in the full spatio-temporal moisture evolution. In this way, the digital twin provides the complete temperature and moisture fields that are otherwise unobservable, effectively functioning as a virtual sensing framework.} This is represented by the Linear Fractional Representation (LFR). The exact $\Delta_1(p(t))$ and $\Delta_2(p(t))$ functions are not mentioned as it is an intellectual property of Canon Production Printing.
\begin{figure}[h]
        \centering 
    \includegraphics[scale=0.3]{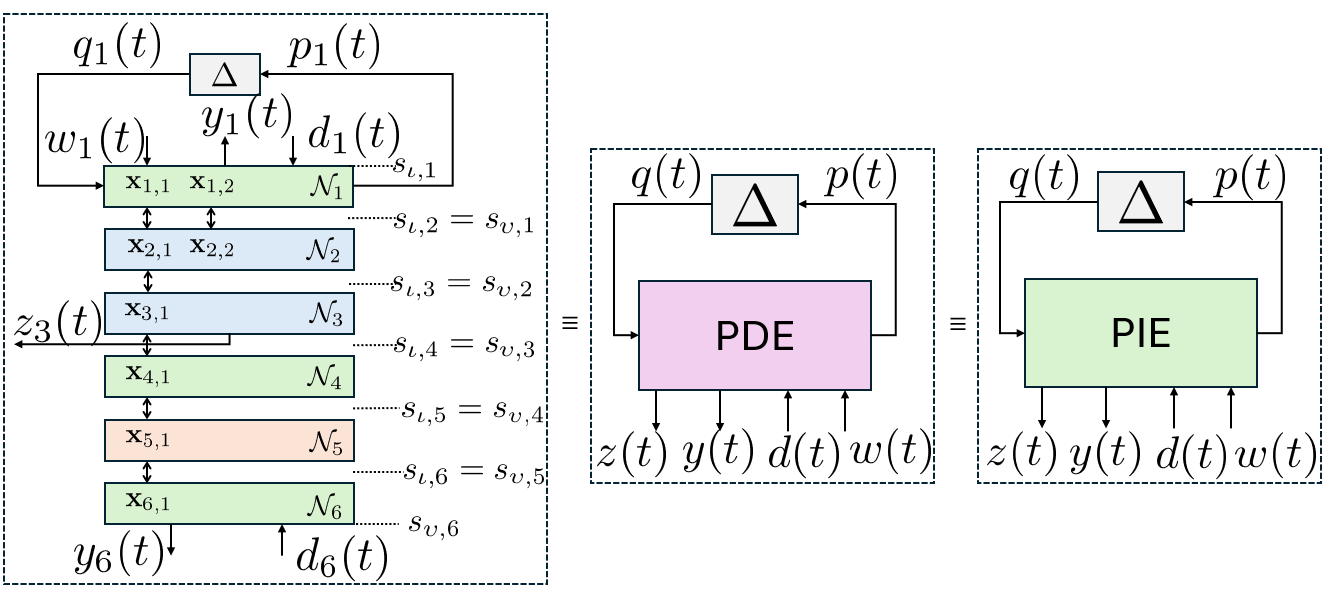}
    \caption{Equivalence between the developed model and PIEs.}
    \label{general pdeode}
\end{figure}
\begin{figure}[H]
    \centering
    \includegraphics[scale=0.40]{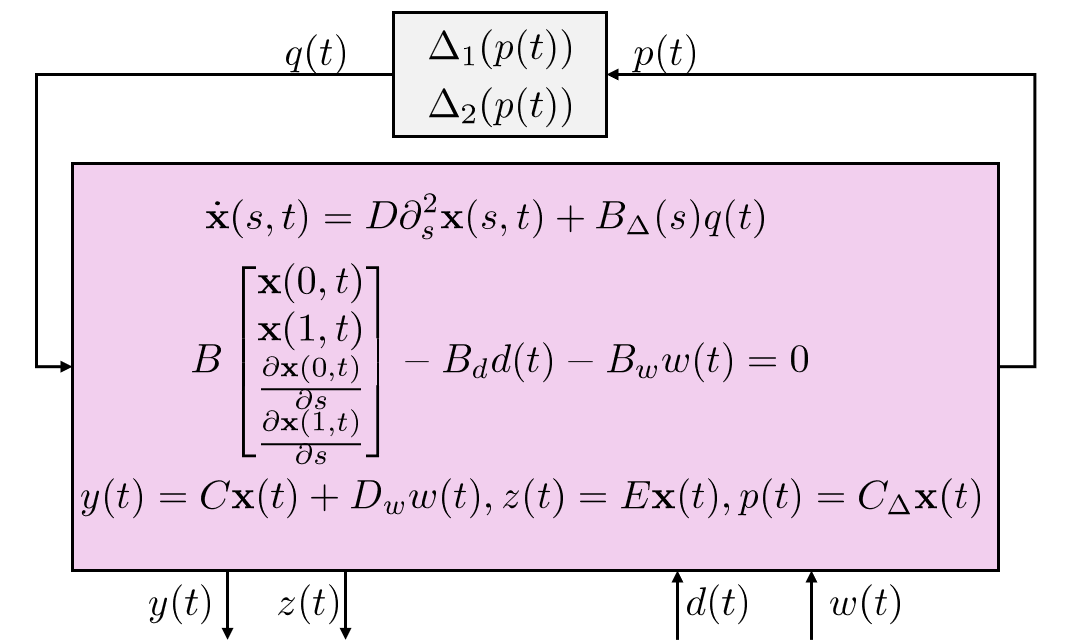}
    \caption{ Schematic of the LFR structure. $\Delta_1(p(t))$ is a static non-linear function that computes the moisture flux evaporating, and $\Delta_2(p(t))$ is a static non-linearity that computes the enthalpy required to change the state of the moisture to its vapour.}
    \label{LFRevap}
\end{figure}
% The moisture content in printed paper is lost through evaporation. The mass flux of the evaporation is a function of paper temperature and moisture and is given by the static non-linear function $\Delta_1(p(t))$. The enthalpy of evaporation is also a function of the paper temperature and is given by the non-linear function $\Delta_2(p(t))$. This is represented by the Linear Fractional Representation (LFR). The exact $\Delta_1(p(t))$ and $\Delta_2(p(t))$ function are not mentioned as it is an intellectual property of Canon Production Printing.

\textcolor{black}{\subsection{Conversion of thermo-fluidic process model into Partial Integral Equations (PIEs)}}
\textcolor{black}{While the graph-theoretic model as well as the coupled model depicted in Fig. 4 involve boundary conditions including external inputs as well as smoothness conditions in terms of the existence of $\partial_s^2 \mathbf{x}$, as shown in \cite{sachin}, a wide class of PDE models admits a PIE representation that is free of boundary and smoothness conditions and is parametrized by PI operators. PIE representation and PI operators provide the following advantages that are suitable for the digital twin's utility:
\begin{itemize}
    \item Like matrices, composition, summation, concatenation, and adjoint  of PI operators yield a PI operator with analytical formulae for each of these operations. This property is the basis for all algorithmic analysis and synthesis procedures of a given PIE and why PIE framework is advantageous for building the digital twin.
    \item In addition to analysis and synthesis, the constraint-free PIE representation also allows for a generalized numerical scheme to approximate the bounded PI operators and simulate PDE models \citep{PEET2024115673} while the original formulation would have required model-specific numerical scheme (e.g., in compared to solving parabolic PDEs, hyperbolic PDEs often require upwind and downwind scheme). 
    \item All analysis, synthesis, and simulation methods are packaged in an open-source toolbox called \texttt{PIETOOLS} (\texttt{https://control.asu.edu/pietools/pietools},\\ also see \cite{9147712})
    where the user may define any compatible PDE model which, in turn, gets automatically converted to PIEs and any required analysis, synthesis, and simulation performed as needed on the basis of the PI operator algebra. 
\item Finally, under specific conditions, the behavior of the original PDE model is equivalent to the converted PIE model. This means any stability property \citep{PEET2021109473}, bound on $L_2$ gain between specific input-output transfer (\(\mathcal{H}_{\infty}\) norm) \citep{9030224}, performance of an optimal observer \citep{das_2019CDC} or a controller \citep{11404177} that are computed in \texttt{PIETOOLS} are equivalently valid for the original PDE model (see \cite{sachin} for more details). 
\end{itemize}}

\begin{prop} The thermo-fluidic model derived using Definition \ref{def1}, described by the dynamic equations \eqref{dynamicmain}, and boundary conditions \eqref{bctemp} and \eqref{bcmoist} is equivalent to a dynamical system $\mathfrak{P}_{\mathrm{p}}$ whose behavior is governed by Partial Integral Equations (PIE) as follows:
 \begin{equation}\begin{aligned}
			\label{def_sol_funda}
			\mathfrak{P}_{\mathrm{p}}\hspace{-0.5ex}:=\hspace{-0.5ex}\left\{\begin{array}{c}
				\mathrm{col}(w, d, p, q, z, y, \mbf v) \mid \forall t \in [0, \infty), \\
    \col\Big(p(t), q(t), z(t), y(t)\Big)\in \R^{n_p+n_q+n_z+n_y}\\
				\mathrm{col}(w(t), d(t))\in \R^{n_w+n_d},\\
    \mathrm{col}(w, d)\in C^{1}(\T, \R^{n_w+n_d})\\
				v(t) \in \mathbb{R}^{n_x} \times \prod\limits_{i} L_2^{n_{\mbf x_i}}[a,b],\\
				\hline\\ 
				 \mathscr{T} \dot{v} +  \mathscr{T}_w \dot{w} +  \mathscr{T}_d \dot{d} =\mathscr{A} v +  \mathscr{B}_1 w +  \mathscr{B}_2 d+ \mathscr{B}_q q,\\

     \bmat{z\\y\\p} = \bmat{\mathscr{C}_1&\mathscr{D}_{11}&\mathscr{D}_{12}&\mathscr{D}_{1q}\\\mathscr{C}_2&\mathscr{D}_{21}&\mathscr{D}_{22}&\mathscr{D}_{2q}\\\mathscr{C}_p&\mathscr{D}_{p1}&\mathscr{D}_{p2}&\mathscr{D}_{pq}} \bmat{v\\w\\d\\q}, \\
     q =  \Delta(p)
			\end{array}\right\}.
		\end{aligned}
        \end{equation}
Here, $\mathscr{T}, \mathscr{T}_w, \mathscr{T}_d$, $\mathscr{A}, \mathscr{B}_{11},  \mathscr{B}_{12}, \mathscr{B}_{q}$, $\mathscr{C}_1, \mathscr{D}_{11},  \mathscr{D}_{12}, \mathscr{D}_{1q}$, $\mathscr{C}_2, \\\mathscr{D}_{21},  \mathscr{D}_{22}, \mathscr{D}_{2q}$, $\mathscr{C}_p, \mathscr{D}_{p1},\mathscr{D}_{p2}, \mathscr{D}_{pq}$ are specific PI operators of appropriate dimensions.
 \end{prop}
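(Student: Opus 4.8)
The plan is to convert each node's PDE, together with its boundary conditions, into the PIE form node-by-node, and then assemble the node-level PIEs into a single graph-level PIE using the interconnection data $A$ and $E$. The backbone is the standard PDE-to-PIE conversion: for a diffusive node $\dot{\mbf x} = D\,\partial_s^2 \mbf x$ on $[s_\iota, s_\upsilon]$, one introduces the \emph{fundamental state} $\mbf v := \partial_s^2 \mbf x$ (the highest-order derivative) and expresses $\mbf x$ itself as $\mbf x = \threepi{\mbf R_i}\mbf v$ plus a boundary term, i.e. via integration the map $\mbf v \mapsto \mbf x$ is a 4-PI operator whose matrix/polynomial parameters $(P, \mbf Q_1, \mbf Q_2, \mbf R_0, \mbf R_1, \mbf R_2)$ are fixed by the boundary conditions \eqref{bctemp}, \eqref{bcmoist}. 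First I would recall from the PIE literature (the references \citep{das_2019CDC, 2024arXiv241101793B}) that for admissible boundary conditions this change of variables is a bijection between PDE solutions and PIE solutions, so no dynamics are lost or added; substituting $\mbf x = \threepi{\mbf R_i}\mbf v$ into $\dot{\mbf x} = D\partial_s^2\mbf x$ yields $\threepi{\mbf R_i}\dot{\mbf v} = D\,\mbf v$, which is precisely the $\mathscr{T}\dot v = \mathscr{A} v$ skeleton, and the extra terms $\mathscr{T}_w\dot w + \mathscr{T}_d\dot d$, $\mathscr{B}_1 w + \mathscr{B}_2 d$ arise because the boundary conditions in \eqref{bctemp}, \eqref{bcmoist} are \emph{inhomogeneous} — they contain $w_1$, $d_1 = T_{hai}$, $d_2 = T_a$, $m_a$ — so the boundary-term contribution to $\mbf x$ picks up affine dependence on $(w,d)$ and, after differentiating in time, on $(\dot w, \dot d)$. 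The nonlinear term $B_\Delta q_1$ in $\mathcal{N}_1$ carries through the conversion unchanged into $\mathscr{B}_q q$, and the LFR closure $q = \Delta(p)$ is untouched.

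Next I would handle the outputs and the latent channel. The measured outputs $y_1 = \mbf x_{1,1}(s_{\iota,1},t)$, $y_6 = \mbf x_{6,1}(s_{\upsilon,6},t)$ and the regulated output $z_3 = \mbf x_{3,1}(s_{\upsilon,3},t)$ are point evaluations of $\mbf x$; since $\mbf x = \threepi{\mbf R_i}\mbf v$ plus boundary/disturbance terms, each point evaluation becomes a 4-PI operator acting on $\col(v,w,d,q)$, giving the rows $(\mathscr{C}_1,\mathscr{D}_{11},\mathscr{D}_{12},\mathscr{D}_{1q})$ and $(\mathscr{C}_2,\mathscr{D}_{21},\mathscr{D}_{22},\mathscr{D}_{2q})$. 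The signal $p$ feeding the nonlinearity consists, per Table~\ref{tab:node-signals}, of paper temperature and moisture evaluated at the relevant boundary, so the row $(\mathscr{C}_p,\mathscr{D}_{p1},\mathscr{D}_{p2},\mathscr{D}_{pq})$ is obtained the same way; the presence of $\mathscr{D}_{pq}$ allows (potential) algebraic feedthrough from $q$ into $p$ through the boundary terms of $\mathcal{N}_1$, which is consistent with the LFR picture in Fig.~\ref{LFRevap}.

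The main obstacle — and the step deserving the most care — is the assembly of the six node-level PIEs into one global PIE, because the nodes are \emph{not} independent: the interface boundary conditions in \eqref{bctemp}, \eqref{bcmoist} (the equalities of temperatures and of heat/moisture fluxes across $s_{\upsilon,i} = s_{\iota,i+1}$, the imperfect-contact relations with coefficient $h_{pb}$) couple the boundary values of adjacent nodes, and these couplings are exactly what the matrices $E_{i,j}$ (aggregated into $E$) and the adjacency matrix $A$ encode. Concretely, after stacking $v = \col(v_1,\dots,v_6)$ (with the finite-dimensional boundary-data block prepended, matching the stated space $\R^{n_x}\times\prod_i L_2^{n_{\mbf x_i}}[a,b]$), the per-node boundary terms must be solved \emph{simultaneously}: one writes the full set of boundary conditions as a linear system in the vector of all boundary values, checks it is invertible (this is where "if nodes have unequal states, additional boundary conditions are applied to make the model consistent" from Definition~\ref{def1} is used, e.g. for the $\mathcal{N}_2$–$\mathcal{N}_3$ interface where $\mathcal{N}_2$ has an extra moisture state whose flux is set to zero in \eqref{bcmoist}), and substitutes the solution back. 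This yields block-structured 4-PI operators whose off-diagonal blocks are precisely the interconnection contributions dictated by $E$; I would verify that the composition and addition of 4-PI operators is again a 4-PI operator (closure of the 4-PI algebra, from the cited PIE framework) so that $\mathscr{T},\mathscr{A},\mathscr{B}_1,\mathscr{B}_2,\mathscr{B}_q$ and the output operators are legitimately 4-PI of the dimensions claimed. The equivalence of behaviors — every $\col(w,d,p,q,z,y,\mbf x)$ solving \eqref{dynamicmain}–\eqref{bcmoist}–LFR maps bijectively to a $\col(w,d,p,q,z,y,\mbf v)$ in $\mathfrak{P}_{\mathrm p}$ — then follows from the node-wise bijections composed with the invertibility of the boundary-coupling system, which completes the proof.
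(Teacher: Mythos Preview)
Your argument is sound and reaches the same conclusion, but the route differs from the paper's. You propose a \emph{convert-then-assemble} strategy: turn each node's PDE into a node-level PIE via the fundamental-state substitution $\mbf v_i := \partial_s^2 \mbf x_i$, then stitch the six PIEs together by solving the coupled interface system encoded in $E$, relying on closure of the 4-PI algebra to guarantee the assembled operators are again 4-PI. The paper instead does \emph{concatenate-then-convert}: it first stacks all node dynamics and all boundary/interface relations into a single coupled PDE on a common normalised domain $[0,1]$, then checks one global admissibility condition (invertibility of the matrix $B_T$ in \eqref{BT matrix}), and finally applies a single PDE-to-PIE conversion theorem (Theorems~6.2 and~6.3 of the cited thesis) to obtain all operators at once. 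Your approach is more explicitly modular and makes the graph structure visible in the proof; the paper's is terser and offloads the heavy lifting to a cited black-box conversion result, with the only verification being a single matrix invertibility check.

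Two points worth tightening in your write-up. First, you never perform the domain normalisation: each node lives on its own interval $[s_{\iota,i}, s_{\upsilon,i}]$, so before you can stack $v = \col(v_1,\ldots,v_6)$ into $\prod_i L_2^{n_{\mbf x_i}}[a,b]$ you must rescale every node to the common interval (the paper does this as the first sub-step and it affects the diffusion coefficients by factors $1/m_i^2$). Second, your ``invertibility of the boundary-coupling system'' is exactly what the paper packages as the $B_T$ matrix test; making that identification explicit would let you invoke the existing conversion theorem directly rather than re-deriving the node-wise bijections and their composition.
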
 
\begin{proof}
The process of constructing the PIEs from the graph-theoretic representation consists of three steps and is depicted in  Fig. \ref{general pdeode}.
% \begin{figure}[h]
%         \centering 
%     \includegraphics[scale=0.35]{Images/General pdeode/conversion image.png}
%     \caption{ Shows the equivalence between the PDE and PIE system}
%     \label{general pdeode}
% \end{figure}

\textit{Step 1: Constructing the LFR-PDE form with normalised domain}

We first concatenate all the node and edge relations, and, subsequently, normalise all the spatial domains between $\X = [0, 1]$. One then arrive at the following PDE coupled representation of the spatially interconnected thermo-fluidic process as depicted in Fig. \ref{LFRevap}.
% \begin{figure}[H]
%     \centering
%     \includegraphics[scale=0.3]{Images/digital twin in practice/LFR_pde.png}
%     \caption{ Schematic of the LFR structure. $\Delta_1(p(t))$ is a static non-linear function that computes the moisture flux evaporating, and $\Delta_2(p(t))$ is a static non-linearity that computes the enthalpy required to change the state of the moisture to its vapour.}
%     \label{LFRevap}
% \end{figure}
Here, $\mathbf{x}:=\mathrm{col}(\mathbf{\Tilde{x}}_{j,1}, \mathbf{\Tilde{x}}_{k,2})_{j \in \mathbb{N}_{[1,6]},k \in \mathbb{N}_{[1,2]}}$ represent a set of scaled states and $D= \mathrm{diag}(\frac{D_{j,1}}{m_j^2},\frac{D_{k,2}}{m_k^2})_{j \in \mathbb{N}_{[1,6]},k \in \mathbb{N}_{[1,2]}}$. The signals $d(t)=\mathrm{col} (d_1(t), d_6(t))$ where $d_1(t) =\mathrm{col} (T_{hai}(t),m_{a}(t))$, $T_{hai}(t)$ and $m_a(t)$ are the ambient Temperature and moisture level at the boundary of the $\mathcal{N}_1$; and $d_6(t) = T_a(t)$, $T_a(t)$ is the ambient temperature at the boundary of $\mathcal{N}_6$. Furthermore, $w(t) = w_1(t)$. The matrices $B$, $B_d$ and $B_w$ are constant and related to boundary effects. Moreover, $y(t) = \mathrm{col}(y_1(t), y_6(t))$ and $z(t) = z_3(t)$.  

The signals $p(t) = p_1(t)= \mathrm{col}(\Tilde{\mathbf{x}}_{1,1}(1,t),\Tilde{\mathbf{x}}_{1,2}(1,t)$ are the temperature and moisture at $s=1$ of the node $\mathcal{N}_{1}$ and $q(t) = q_1(t)$ is a column vector of evaporation mass flux and enthalpy of evaporation based on the current temperature and moisture value. The $B_{\Delta}$ matrix converts the enthalpy into temperature.

The known initial conditions for the states are given by $\Tilde{\mathbf{x}}_{2,1}(s,0) =\Tilde{\mathbf{x}}_{3,1}(s,0)= T_{paper}$, $\Tilde{\mathbf{x}}_{2,2}(s,0)=\omega$ and $\Tilde{\mathbf{x}}_{5,1}(s,0) = T_b$. The initial conditions for the states $\Tilde{\mathbf{x}}_{1,1},\Tilde{\mathbf{x}}_{1,2},\Tilde{\mathbf{x}}_{4,1},\Tilde{\mathbf{x}}_{6,1}$ are defined such that they match the boundary conditions.\\

\textit{Step 2: Verify the invertibility of $B_T$}

For a well-defined PDE, there must be a sufficient number of boundary conditions. Without this, PDE models do not admit a PIE representation. Whether a set of boundary condition is admissible for PIE conversion or not can be easily tested by the invertibility of  the matrix $B_T$ as given by \cite[Theorem 6.2]{phdthesis} where
\begin{equation}
\label{BT matrix}
B_T= B  \begin{bmatrix}
I & I & 0 & 0 & 0 & 0 \\
0 & 0 & I & I & 0 & 0 \\
0 & 0 & 0 & I & I & I
\end{bmatrix}^\top
\end{equation}
We can indeed show that, in the case of fixation process, the matrix $B_T$ is invertible making the derived model compatible for PIE conversion.

\textit{Step 3: Apply the conversion formula from PDE to PIE}

Since the $B_T$ is invertible, using the formulae listed in \citep[Theorm 6.3]{phdthesis}, the PIE operators $\mathscr{T}, \mathscr{T}_w, \mathscr{T}_d$, $\mathscr{A}, \mathscr{B}_{11},  \mathscr{B}_{12}, \mathscr{B}_{q}$, $\mathscr{C}_1, \mathscr{D}_{11},  \mathscr{D}_{12}, \mathscr{D}_{1q}$, $\mathscr{C}_2, \mathscr{D}_{21},  \mathscr{D}_{22}, \mathscr{D}_{2q}$, $\mathscr{C}_p$, $\mathscr{D}_{p1},\mathscr{D}_{p2}, \mathscr{D}_{pq}$ are obatined.
\end{proof}

% \begin{rem} \label{invertibility}
%     $B_T$ invertibility is the related exclusion of the Neumann boundary condition. For the $B_T$ matrix to be invertible, the PDE must be well-defined with a sufficient number of boundary conditions. However, when defining both the boundaries for a second-order PDE with Neumann boundary conditions, the invertibility of $B_T$ does not hold.
% \end{rem}

%  \begin{figure}[t]
%         \centering 
%     \includegraphics[scale=0.3]{Images/General pdeode/conversion image .pdf}
%     \caption{ Shows the equivalence between the PDE and PIE system}
%     \label{general pdeode}
% \end{figure}

\subsection{Implementation and simulation of the model}
Based on graph-theoretic definitions, all signals associated with each node are known. The digital twin is implemented in an object-oriented framework using \texttt{MATLAB} and built on top of \texttt{PIETOOLS} version 2022. Each node is represented as a class encapsulating its properties to reduce complexity. The \texttt{layer} class defines each paper layer with parameters such as diffusion, transport, reaction, and transfer coefficients, spatial domain, boundary/initial conditions, and boundary inputs. The \texttt{spatialconveyor} class defines spatial conveyor nodes, storing parameters such as diffusion coefficient, domain size, and initial conditions. This modular design allows flexible definition of any number of states per node in arbitrary order, while ensuring correct interconnections through the $E_{i,j}$ matrix between adjacent nodes. The thermo-fluidic process is simulated using the equivalent PIE formulation from \eqref{def_sol_funda}, solved via PIE-Galerkin Projection (PGP) \citep{PEET2024115673}.

\subsection{Comparison of the fixation process model with machine data}

\textcolor{black}{This section compares simulation results with sensor measurements for two test cases, unprinted 350~$\mathrm{g/m^{2}}$ paper and printed 115~$\mathrm{g/m^{2}}$ paper. The first case involves only linear heat diffusion, while the second additionally includes moisture transport and nonlinear evaporation, providing comprehensive validation of the thermo-fluidic model. Two experimental data sets are used, one for parameter tuning based on RMSE against the mean sensor data, and another for validation. As only the paper’s top surface temperature and the conveyor’s bottom surface are measurable during operation, validation necessarily focuses on these quantities. Real-time moisture measurements are not available in practice, making temperature the only ground truth information for assessing the accuracy of the digital twin. The reliability of the reconstructed moisture profile, therefore, depends on accurately estimating the temperature distribution that drives the evaporation dynamics via the nonlinear maps $\Delta_{1}$ and $\Delta_{2}$. In this sense, the estimator acts as a virtual sensor that uses available temperature measurements to infer the internal temperature and moisture fields that cannot be directly observed.}  
\subsubsection{Simulation results for 350 $\mathrm{g/m^2}$ blank paper:}
Simulations are performed on a 350 $\mathrm{g/m^2}$ sheet of paper and are compared with sensor measurements.
Figure \ref{spatiotemporal80}.  shows the simulation result of the temperature variation of different states over their spatial domain and time. Figure \ref{sensor paper 80} and \ref{sensor conveyor  80} present the simulation results alongside the sensor readings for the top and bottom boundary of $\mathcal{N}_{2}$ and $\mathcal{N}_{5}$, respectively.  $\kappa_1$ and $\kappa_6$ are tuned to minimize the Root Mean Square Error (RMSE) and best align with the sensor readings using grid search. Table \ref{rmse80} presents the (RMSE) for the actual simulation results, along with the Normalized Root Mean Square Error (NRMSE) for the normalized data.
\begin{figure}[H]
    \centering
    \includegraphics[scale=0.3]{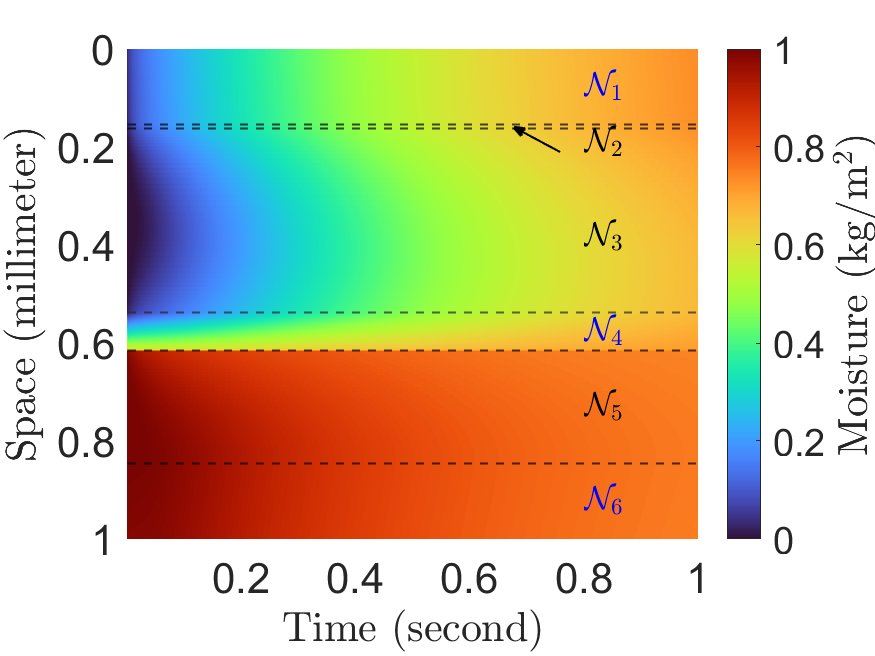}
    \caption{ Temperature variation over space and time for all the nodes of the 350 $\mathrm{g/m^2}$ paper and conveyor. (\protect\blacklinedashed) represents the boundary for each node. The black arrow shows the node with ink and moisture.}
    \label{spatiotemporal80}
\end{figure}

\begin{figure}[htbp]
    \centering
    \begin{subfigure}[b]{0.49\linewidth}
        \centering
\includegraphics[width=\linewidth]{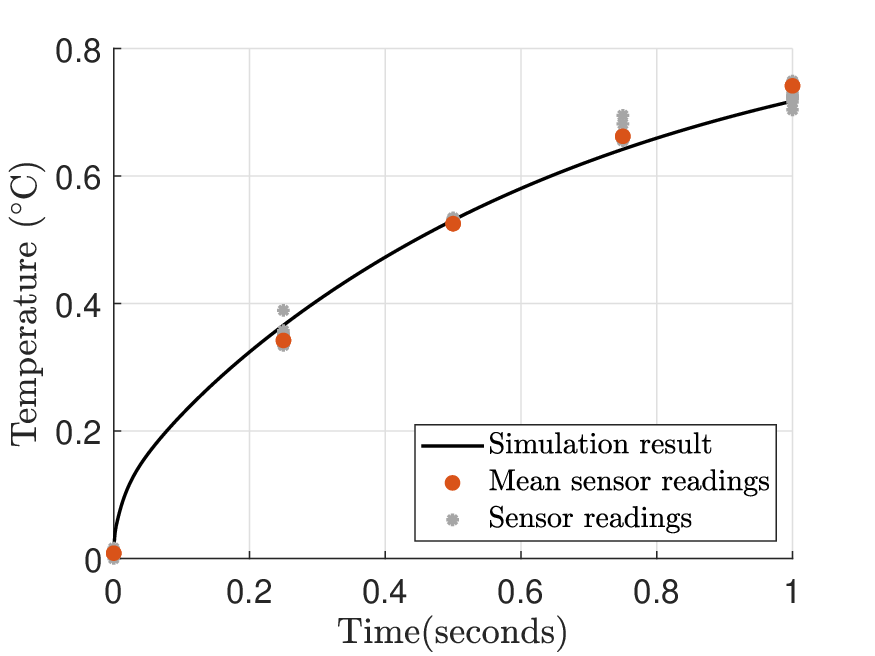}
        \caption{ (\protect\blackline) shows the temperature of the top surface of the paper obtained through simulations. (\textcolor[HTML]{A6A6A6}{*}) shows the sensor readings and (\protect\tikz \protect\fill[red] (0,0) circle (2.5pt);) shows the mean sensor reading.}
        \label{sensor paper 80}
    \end{subfigure}
    \hfill
    \begin{subfigure}[b]{0.49\linewidth}
        \centering
        \includegraphics[width=\linewidth]{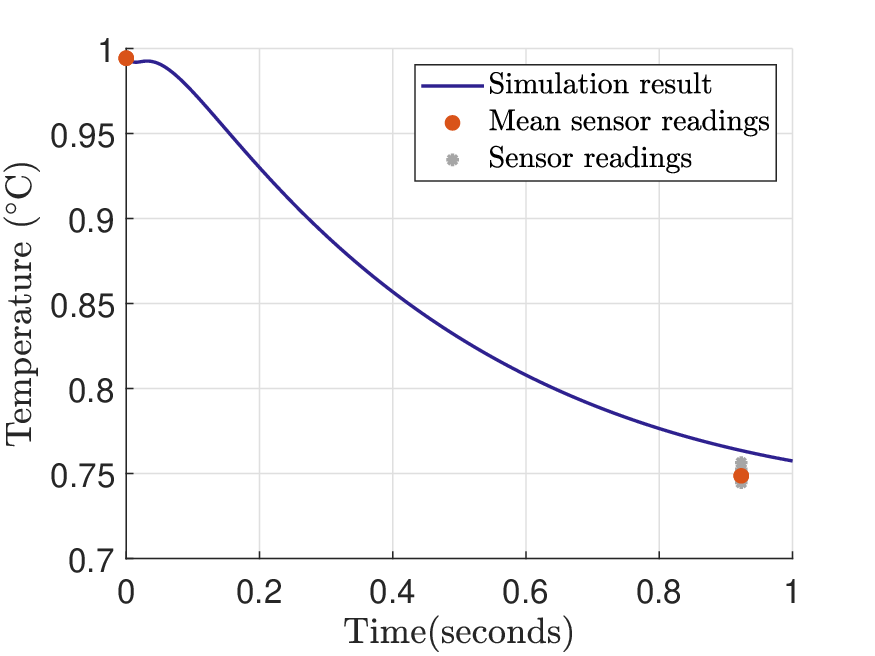}
        \caption{ (\protect\blueline) shows the temperature of the top surface of the paper obtained through simulations. (\textcolor[HTML]{A6A6A6}{*}) shows the sensor readings and (\protect\tikz \protect\fill[red] (0,0) circle (2.5pt);) shows the mean sensor reading.}
        \label{sensor conveyor  80}
    \end{subfigure}
    \caption{Comparison of simulation results and sensor readings for 350 $\mathrm{g/m^2}$ paper.}
    \label{subfigure 80gsm}
\end{figure}
% \begin{figure}[H]
%     \centering
%     \includegraphics[scale=0.35]{Chapters/results/NEWPLOTS/350GSMpaper.eps}
%     \caption{(\protect\blackline) represents the temperature of the top surface of the 80 $\mathrm{g/m^2}$ paper obtained through simulations. (\protect\tikz \protect\fill[red] (0,0) circle (2.5pt);) represents the sensor data obtained through experiments. (\protect\coloredline) shows the error band of $\pm$5$^\circ$C.}
%     \label{sensor paper 80}
% \end{figure}

% \begin{figure}[H]
%     \centering
%     \includegraphics[scale=0.35]{Chapters/results/NEWPLOTS/350GSMbelt.eps}
%     \caption{(\protect\blueline) represents the temperature variation of the conveyor with 80 $\mathrm{g/m^2}$ paper, obtained through simulations. (\protect\tikz \protect\fill[red] (0,0) circle (2.5pt);) represents the sensor data obtained through experiments. (\protect\coloredline) shows the error band of $\pm$2$^\circ$C.}
%     \label{sensor conveyor  80}
% \end{figure}

 % $\kappa_1$ and $\kappa_6$ are tuned to minimize the Root Mean Square Error (RMSE) and best align with the sensor readings using grid search.

% Table \ref{rmse80} presents the (RMSE) for the actual simulation results, along with the Normalized Root Mean Square Error (NRMSE) for the normalized data.
\begin{table}[H]
\centering
\caption{RMSE and NRMSE for 350$g/m^2$ blank paper and converyor.}
\begin{tabular}{c c c} 
  & RMSE ($^\circ$C) & NRMSE ($^\circ$C) \\ [0.5ex] 
 \hline
  Paper & 1.91 & 0.038  \\
 Conveyor & 0.58 &  0.0096 \\ [1ex] 
 \hline
\end{tabular}
\label{rmse80}
\end{table}

\subsubsection{Simulation results for 115 $\mathrm{g/m^2}$ printed paper:}
Simulations were performed on 115 $\mathrm{g/m^2}$ paper printed with 10 $\mathrm{g/m^2}$.
Figures \ref{115temp3d} illustrates the spatial and temporal variations in temperature at all nodes. Figure. \ref{paper 115} and \ref{conveyor  115} present the simulation results alongside the sensor readings for the top and bottom boundary of $\mathcal{N}_{2}$ and $\mathcal{N}_{5}$, respectively. Figure \ref{moist 115} shows the average moisture variation in node $\mathcal{N}_{2}$. It can be observed that the variation is zero in the beginning. This is because the temperature is not high enough to change the state of the moisture to its vapor.  $\kappa_1$ and $\kappa_6$ are tuned to minimize the RMSE and best align with the sensor readings using grid search.

Table \ref{rmse115} presents the RMSE for the actual simulation results, along with the NRMSE for the normalized data.
\begin{table}[H]
\centering
\caption{RMSE and NRMSE for 115 $\mathrm{g/m^2}$ printed paper and converyor.}
\begin{tabular}{c c c} 
  & RMSE ($^\circ$C) & NRMSE ($^\circ$C) \\ [0.5ex] 
 \hline
  Paper & 2.08 &  0.0378 \\
 Conveyor & 2.65 &  0.0482 \\ [1ex] 
 \hline
\end{tabular}
\label{rmse115}
\end{table}
\begin{figure}[H]
    \centering
    \includegraphics[scale=0.3]{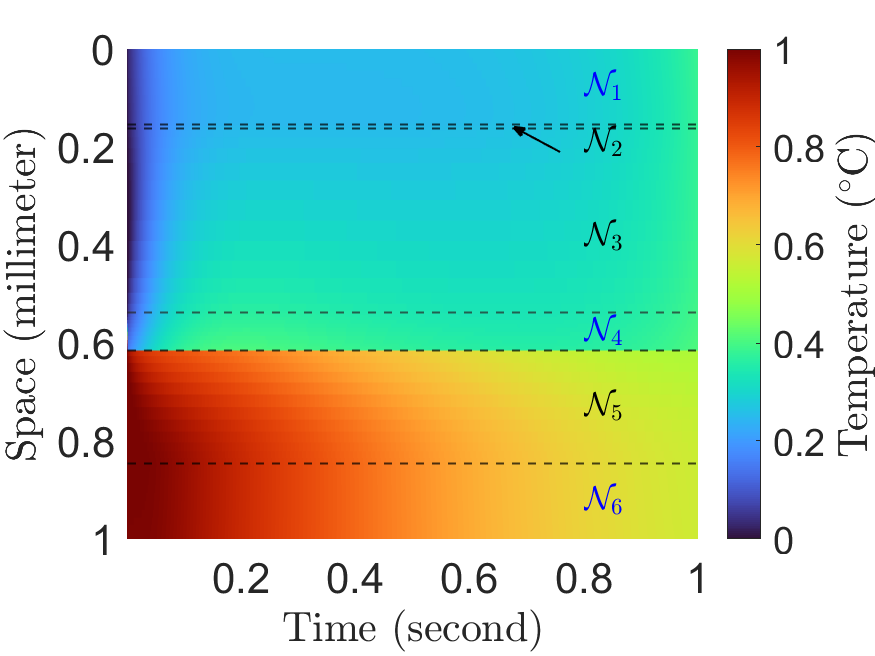}
    \caption{ Temperature variation over space and time for all the nodes of the 115 $\mathrm{g/m^2}$ paper printed with 10 $\mathrm{g/m^2}$ of ink and conveyor. 
 (\protect\blacklinedashed) represent the boundary for each node. The black arrow shows the node with ink and moisture.}
    \label{115temp3d}
\end{figure}

% \begin{figure}[H]
%     \centering
%     \includegraphics[scale=0.35]{Chapters/results/scaled/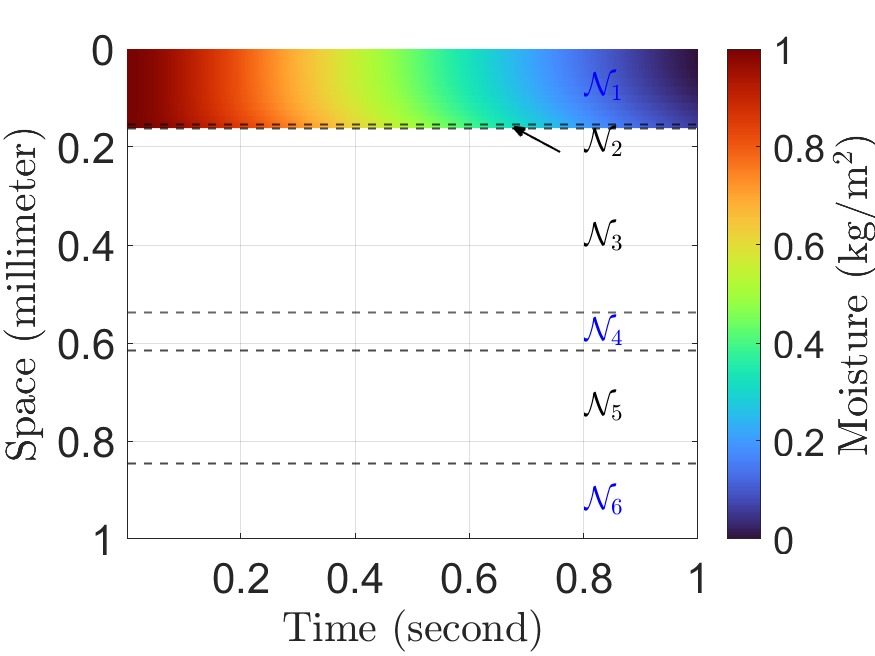}
%     \caption{Moisture variation over space and time for all the nodes of the 115 $\mathrm{g/m^2}$ paper printed with 10 $\mathrm{g/m^2}$ of ink and conveyor.
%  (\protect\blacklinedashed) represent the boundary for each node. The black arrow shows the node with ink and moisture. The plot is blank for nodes that do not have a moisture state.}
%     \label{115moist3d}
% \end{figure}

\begin{figure}[htbp]
    \centering
    \begin{subfigure}[b]{0.49\linewidth}
        \centering
\includegraphics[width=\linewidth]{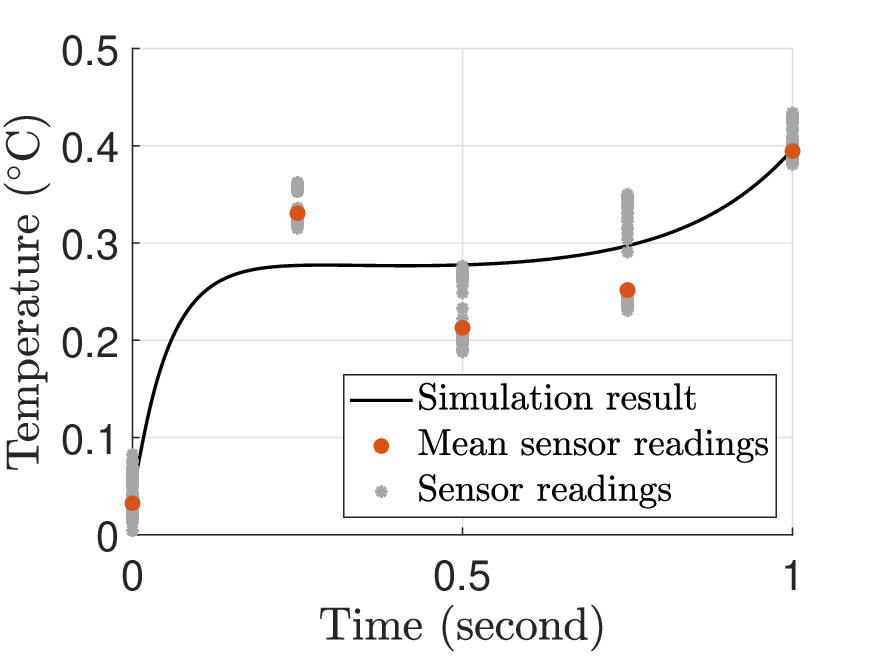}
        \caption{ (\protect\blackline) shows the temperature of the top surface of the paper obtained through simulations. (\textcolor[HTML]{A6A6A6}{*}) shows the sensor readings and (\protect\tikz \protect\fill[red] (0,0) circle (2.5pt);) shows the mean sensor reading.}
        \label{paper 115}
    \end{subfigure}
    \hfill
    \begin{subfigure}[b]{0.49\linewidth}
        \centering
        \includegraphics[width=\linewidth]{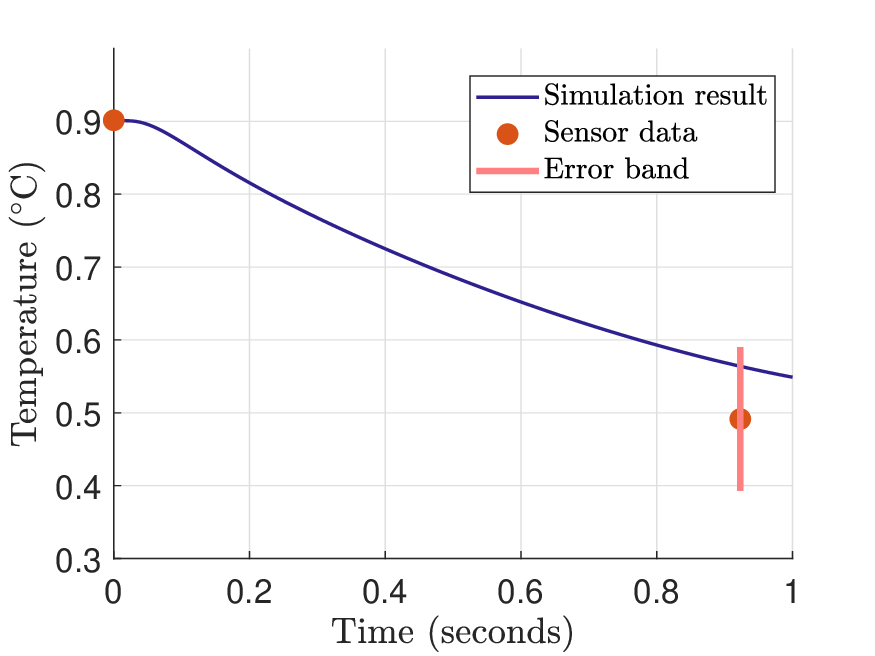}
        \caption{(\protect\blueline) shows the temperature of the top surface of the paper obtained through simulations. (\textcolor[HTML]{A6A6A6}{*}) shows the sensor readings and (\protect\tikz \protect\fill[red] (0,0) circle (2.5pt);) shows the mean sensor reading.}
        \label{conveyor  115}
    \end{subfigure}
    \caption{Comparison of simulation results and sensor readings for 115 $\mathrm{g/m^2}$ paper.}
    \label{subfigure 115gsm}
\end{figure}

% \begin{figure}[H]
%     \centering
%     \includegraphics[scale=0.35]{Chapters/results/NEWPLOTS/115GSMval.eps}
% \caption{(\protect\blackline) represents the temperature of the top surface of the 115 $\mathrm{g/m^2}$ paper obtained through simulations. (\protect\tikz \protect\fill[red] (0,0) circle (2.5pt);) represents the sensor data obtained through experiments. (\protect\coloredline) shows the error band of $\pm$5$^\circ$C.}
%     \label{paper 115}
% \end{figure}

% \begin{figure}[H]
%     \centering
%     \includegraphics[scale=0.35]{Chapters/results/NEWPLOTS/115GSMbelt.eps}
%     \caption{(\protect\blueline) represents the temperature variation of the conveyor with 115 $\mathrm{g/m^2}$ paper, obtained through simulations. (\protect\tikz \protect\fill[red] (0,0) circle (2.5pt);) represents the sensor data obtained through experiments. (\protect\coloredline) shows the error band of $\pm$2$^\circ$C.}
%     \label{conveyor  115}
% \end{figure}

\begin{figure}[htbp]
    \centering
    \begin{subfigure}[b]{0.49\linewidth}
        \centering
        \includegraphics[width=\linewidth]{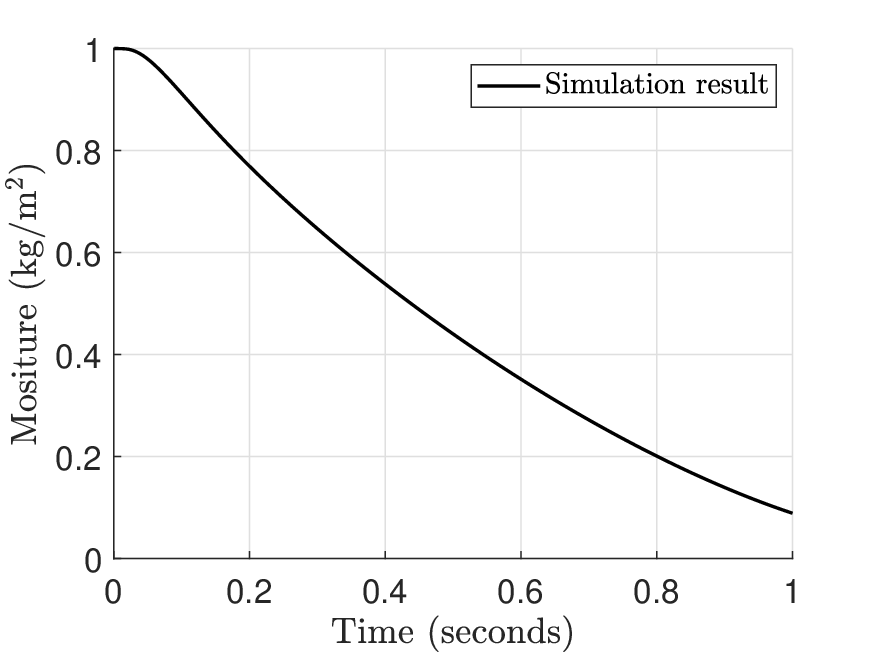}
        \caption{ (\protect\blackline) shows  average moisture variation over time in $\mathcal{N}_{2}$.}
        \label{moist 115}
    \end{subfigure}
    \hfill
    \begin{subfigure}[b]{0.49\linewidth}
        \centering
        \includegraphics[width=\linewidth]{115GSMmoist3d.eps}
        \caption{ Moisture variation over space and time for all the nodes.}
        \label{115moist3d}
    \end{subfigure}
    \caption{Simulation results of 115 $\mathrm{g/m^2}$ paper printed with 10 $\mathrm{g/m^2}$ of ink and conveyor. In \subref{115moist3d}, (\protect\blacklinedashed) represents the boundary for each node. The black arrow shows the node with ink and moisture. The plot is blank for nodes that do not have a moisture state.}
    \label{subfigure 115gsm moist}
\end{figure}

\section{Digital Twining of Fixation Process through $\mathcal{H}_{\infty}$ Optimal State Estimator Synthesis}\label{estimator section}
Beyond simulation and rapid-prototyping, the model developed in section 3 alone is not sufficient to monitor the actual states of the fixation process during printing operation. In this paper, we use the derived model as the basis for establishing a state estimator that acts as an adaptive digital twin to monitor the evolution of thermo-fluidic behavior in a fixation unit. At the same time, this state estimator must also be robust towards unknown perturbations that may corrupt the dynamics and provide reliable estimation despite them. 
\begin{figure}[H]
    \centering
    \includegraphics[scale=0.320]{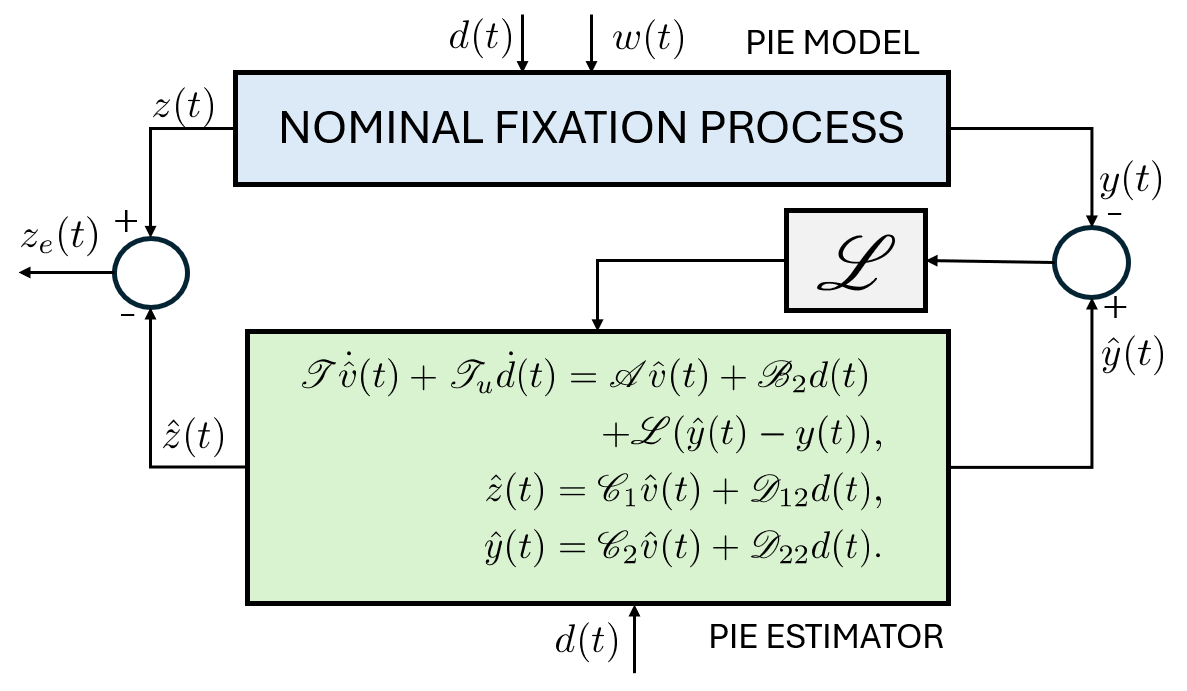}
    \caption{The Luenberger state estimator. $w(t)$ is the unknown exogenous disturbance.}
    \label{Lestimator}
\end{figure} 
In this paper, taking advantage of the computability of the PIE representation, we propose a Luenberger state estimator as shown in Figure \ref{Lestimator}. The adaptability of the digital twin results from the introduction of the estimator gain $\mathcal{L}$, which explicitly takes into account the difference between the sensor measurements and the predicted output in estimation.%\textcolor{red}{check}

For simplicity and ease of testing, we first present the results by not considering the non-linear function $\Delta$ for the estimator synthesis, thus substituting $\Delta= \varnothing$. Thus, for the nominal fixation process, the dynamic equations and the output channels of the true system are given by
\begin{equation}\label{truesys}
    \begin{split}
          \mathscr{T}\dot{v}(t) + \mathscr{T}_w \dot{w}(t)+ \mathscr{T}_u \dot{d}(t)= \mathscr{A}v(t) +\mathscr{B}_1 w(t) \\+ \mathscr{B}_2 d(t),  \\ 
            z(t) = \mathscr{C}_1v(t)+\mathscr{D}_{11}w(t)+\mathscr{D}_{12}d(t),\\    y(t)=\mathscr{C}_2v(t)+\mathscr{D}_{21}w(t)+\mathscr{D}_{22}d(t). 
    \end{split}
\end{equation}
The dynamic equations and the output channels of the estimator are given by
\begin{equation}\label{estsys}
    \begin{split}
          \mathscr{T}\dot{\hat{v}}(t) + \mathscr{T}_u \dot{d}(t)= \mathscr{A}\hat{v}(t) + \mathscr{B}_2 d(t) \ \\+\mathscr{L}(\hat{y}(t)-y(t)), \\ 
  \hat{z}(t) = \mathscr{C}_1\hat{v}(t)+\mathscr{D}_{12}d(t),\\ 
\hat{y}(t) = \mathscr{C}_2\hat{v}(t)+\mathscr{D}_{22}d(t). 
    \end{split}
\end{equation}
The error system and the estimation error signal obtained by solving \eqref{truesys} and \eqref{estsys}, are given by 
\begin{equation}
\begin{split}
    \mathscr{T}\dot{e}(t) + \mathscr{T}_w \dot{w}(t) = &(\mathscr{A}+\mathscr{L}\mathscr{C}_2)e(t)\\+&(\mathscr{B_1}+\mathscr{L}\mathscr{D}_{21})w(t)\\
    z(t)-\hat{z}(t) = &\mathscr{C}_1e(t)+\mathscr{D}_{11}w(t)
\end{split}
\end{equation}
The objective is to synthesize an estimator gain $\mathscr{L}$, and the smallest value of $\mathcal{H}_{\infty}$ gain $\gamma$ such that $||z_e(t)||_{2} \leq \gamma ||w||_{2}(t)$, and $z_e(t) = z(t)-\hat{z}(t)$. Here, $||z_e(t)||_2$ and $||w(t)||_2$ are the $L_2$ norm of the estimation error and disturbance signals respectively (assuming these signals are square integrable).\\
\begin{prop} Suppose there exists scalars $\epsilon,\gamma >0$. Now consider the following optimization problem:
\[
\hat{\gamma} = \arg\min \gamma
\]
subject to
\begin{itemize}
    \item $\mathscr{P}:=\fourpi{P}{\mbf{Q_1}}{\mbf{Q_2}}{\mbf{R_i}} \geq \epsilon I$,
    \item \[\begin{bmatrix} \mathscr{T}_w^* (\mathscr{P} \mathscr{B}_1 + \mathscr{Z} \mathscr{D}_{21}) + (\cdot)^* & 0 & (\cdot)^* \\ 0 & 0 & 0 \\ -(\mathscr{P} \mathscr{A} + \mathscr{Z} \mathscr{C}_2)^* \mathscr{T}_w & 0 & 0 \end{bmatrix}\]\[ + \begin{bmatrix} -\gamma I & -\mathscr{D}_{11}^{*} & -(\mathscr{P} \mathscr{B}_1 + \mathscr{Z} \mathscr{D}_{21})^* \\ (\cdot)^* & -\gamma I & \mathscr{C}_1 \\ (\cdot)^* & (\mathscr{P} \mathscr{A} + \mathscr{Z} \mathscr{C}_2)^* & (\cdot)^* \end{bmatrix} \leq 0.
\]
\end{itemize}
    Then $\mathscr{P}^{-1}$ exists and is a bounded linear operator. Furthermore, $\mathscr{L}=\mathscr{P}^{-1} \mathscr{Z}$, for any $0 \neq w, z_e \in L_2 [0,\infty)$, satisfies 
    \[\sup \frac{||z_e||_{2}}{||w||_{2}} \leq \hat{\gamma}\]
\end{prop}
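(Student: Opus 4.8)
The plan is to establish this as a bounded-real-lemma–type result for the infinite-dimensional error system, rephrased in the PIE algebra. First I would observe that the error system
$\mathscr{T}\dot e + \mathscr{T}_w\dot w = (\mathscr{A}+\mathscr{L}\mathscr{C}_2)e + (\mathscr{B}_1+\mathscr{L}\mathscr{D}_{21})w$,
$z_e = \mathscr{C}_1 e + \mathscr{D}_{11}w$,
is a PIE whose $\mathcal{H}_\infty$ gain is certified, via the standard dissipativity argument, by a storage functional $V(e)=\langle \mathscr{T} e,\mathscr{P}\mathscr{T} e\rangle$ together with the requirement $\mathscr{P}\geq \epsilon I$. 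Differentiating $V$ along trajectories and appending the supply rate $\gamma\|w\|^2-\gamma^{-1}\|z_e\|^2$ (or equivalently the $\gamma$-scaled quadratic form used in the statement) produces a quadratic form in $\mathrm{col}(w,z_e,e)$ — after using the PIE dynamics to substitute for $\mathscr{T}\dot e$ — whose negativity is exactly the operator LMI written in the proposition, once the change of variable $\mathscr{Z}:=\mathscr{P}\mathscr{L}$ is made so that the bilinear term $\mathscr{P}\mathscr{L}$ is linearized. This is the same substitution that renders the synthesis convex; I would spell out that $\mathscr{P}\mathscr{A}+\mathscr{Z}\mathscr{C}_2 = \mathscr{P}(\mathscr{A}+\mathscr{L}\mathscr{C}_2)$ and $\mathscr{P}\mathscr{B}_1+\mathscr{Z}\mathscr{D}_{21}=\mathscr{P}(\mathscr{B}_1+\mathscr{L}\mathscr{D}_{21})$.

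Second, I would handle the invertibility claim: since $\mathscr{P}$ is a $4$-PI operator satisfying $\mathscr{P}\geq\epsilon I$ on the relevant Hilbert space $\R^{n_x}\times\prod_i L_2^{n_{\mathbf x_i}}[a,b]$, it is a bounded, self-adjoint, coercive operator, hence boundedly invertible by the Lax–Milgram theorem (or simply by the spectral theorem, the spectrum being bounded away from $0$). That $\mathscr{P}^{-1}$ is again expressible within the PI algebra is known from the PIETOOLS framework, but for the proof only boundedness of $\mathscr{P}^{-1}$ is needed, which is immediate from coercivity. This in turn makes $\mathscr{L}=\mathscr{P}^{-1}\mathscr{Z}$ a well-defined bounded operator.

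Third, I would close the $\mathcal{H}_\infty$ bound. Assuming the second LMI holds, congruence/Schur-complement manipulations on the block operator inequality reduce it to the dissipation inequality $\dot V \leq \gamma\|w\|_2^2 - \gamma^{-1}\|z_e\|_2^2$ pointwise in $t$; integrating from $0$ to $\infty$ with zero initial error and $V\geq 0$ yields $\gamma^{-1}\|z_e\|_2^2 \leq \gamma\|w\|_2^2$, i.e. $\|z_e\|_2\le\gamma\|w\|_2$, and taking the infimum over feasible $\gamma$ gives the claimed bound with $\hat\gamma$. For general (nonzero, square-integrable) $w$ and the resulting $z_e\in L_2[0,\infty)$ one argues on finite horizons $[0,T]$ and lets $T\to\infty$, using that $V(e(T))\ge 0$. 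I would cite the PIE-based $\mathcal{H}_\infty$-analysis results (the KYP/LMI conditions for PIEs in the PIETOOLS references already used in the paper, e.g. \cite{shivakumar2020pietoolsmatlabtoolboxmanipulation, das_2019CDC, 2024arXiv241101793B}) to justify that the formal dissipativity computation is rigorous in the PIE setting — in particular that $\mathscr{T}$, though not boundedly invertible, still admits this storage-functional analysis because of the structural properties of PIE operators.

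The main obstacle, and the step requiring the most care, is the dissipativity computation itself in the presence of $\mathscr{T}$ and $\mathscr{T}_w$: because $\mathscr{T}$ is not invertible and $\dot w$ enters the dynamics, the derivative $\dot V$ must be taken along the constrained PIE trajectory and the $\mathscr{T}_w\dot w$ term must be correctly incorporated (this is why the off-diagonal blocks with $\mathscr{T}_w^*(\cdot)$ appear in the first matrix of the LMI). Getting the adjoints, the placement of $\mathscr{T}_w$, and the sign conventions exactly matched to the stated block operator inequality — and verifying that the quadratic form so obtained is genuinely the one displayed — is the delicate bookkeeping that the proof hinges on; everything else (invertibility of $\mathscr{P}$, the $L_2$-gain conclusion from the dissipation inequality) is routine.
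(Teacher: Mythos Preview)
Your sketch is correct and is precisely the Lyapunov/KYP dissipativity argument with the linearizing change of variable $\mathscr{Z}=\mathscr{P}\mathscr{L}$ that underlies the result; the paper, however, does not reproduce any of this and simply writes ``The proof follows from \cite{phdthesis}, Theorem~6.7.'' In other words, you have unpacked the content of the cited theorem (storage functional $V(e)=\langle \mathscr{T}e,\mathscr{P}\mathscr{T}e\rangle$, coercivity $\Rightarrow$ bounded invertibility of $\mathscr{P}$, integration of the dissipation inequality), whereas the paper defers the entire argument to that external reference; mathematically there is no difference in approach.
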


\begin{proof}
    The proof follows from \cite{phdthesis}, Theorem 6.7.
\end{proof}
\textcolor{black}{The inequalities in Proposition 3 are operator inequalities where all the operators are PI operators and the unknown PI operators ($\mathscr{P}, \mathscr{Z}$) appear linearly in the inequalities. In such a case, these operator inequalities (as well as a convex optimization problem subject to such inequalities) can be solved using Linear Matrix Inequalities (LMIs). Formulation of these LMIs, solving them, and inversion of $\mathscr{P}$ are performed in the \texttt{PIETOOLS} software package.}
\subsection{$\mathcal{H}_{\infty}$ optimal estimator for blank paper}
The measured output is given by $y(t) = \mathrm{col}(\mathbf{x}_{1,1}(s_{\iota,1},t),\\\mathbf{x}_{6,1}(s_{\upsilon,6},t))$. The regulated output is chosen to be $z_3(t) = \mathbf{x}_{3,1}(s_{\upsilon,3},t)$. The exogenous disturbance is chosen to be $w_1(t)=10 e^{-0.1t}sin(t)$, and the system is simulated for 50 seconds. Table \ref{esterror} shows the different initial conditions considered for the estimator and the maximum error in the estimation.
\begin{table}[H]
\centering
\caption{Maximum estimation error for different initial condition choices.}
\begin{tabular}{c c c} 
  Initial condition & Error signal & \makecell{peak error \\($^\circ$C)} \\ [0.5ex] 
 \hline
  Same as the true system & $z_{e,1}(t)$ & 0.3488  \\
 5\% error from the true system & $z_{e,2}(t)$ & 2.4525  \\
   10\% error from the true system & $z_{e,3}(t)$ & 4.9053  \\
    Unknown (zero) & $z_{e,4}(t)$ & 49.0567  \\[1ex] 
 \hline
\end{tabular}
\label{esterror}
\end{table}
The $\mathcal{H}_{\infty}$ gain $\gamma$ was equal to 0.08.
 Figure \ref{estimatorall} shows the estimation error over time plotted against the disturbance.
\begin{figure}[H]
    \centering
    \includegraphics[scale=0.5]{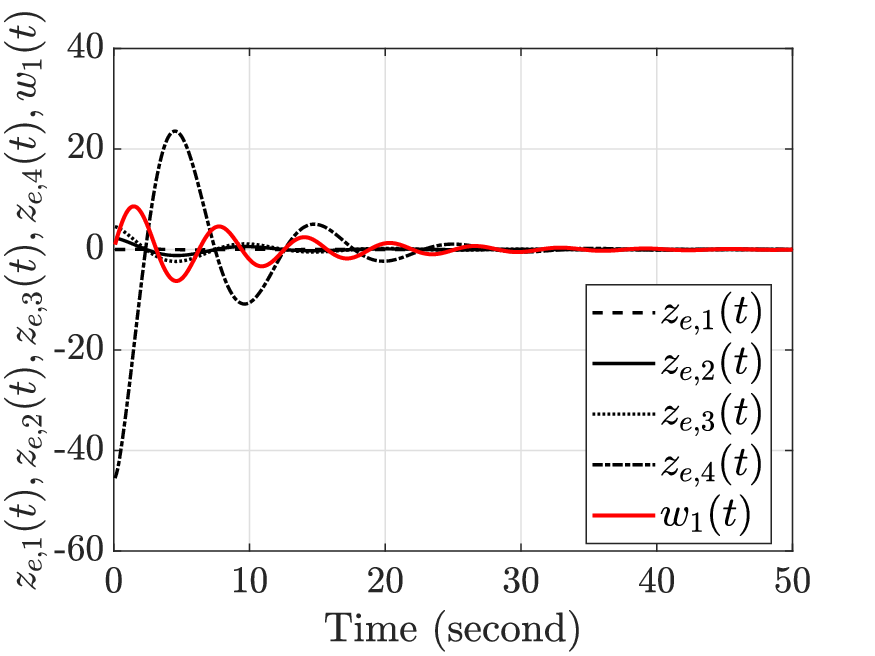}
    \caption{ Disturbance and estimation error for different initial conditions.}
    \label{estimatorall}
\end{figure}
\begin{figure*}[t]
    \centering
    \begin{subfigure}[b]{0.5\columnwidth}
        \centering
        \includegraphics[width=\linewidth]{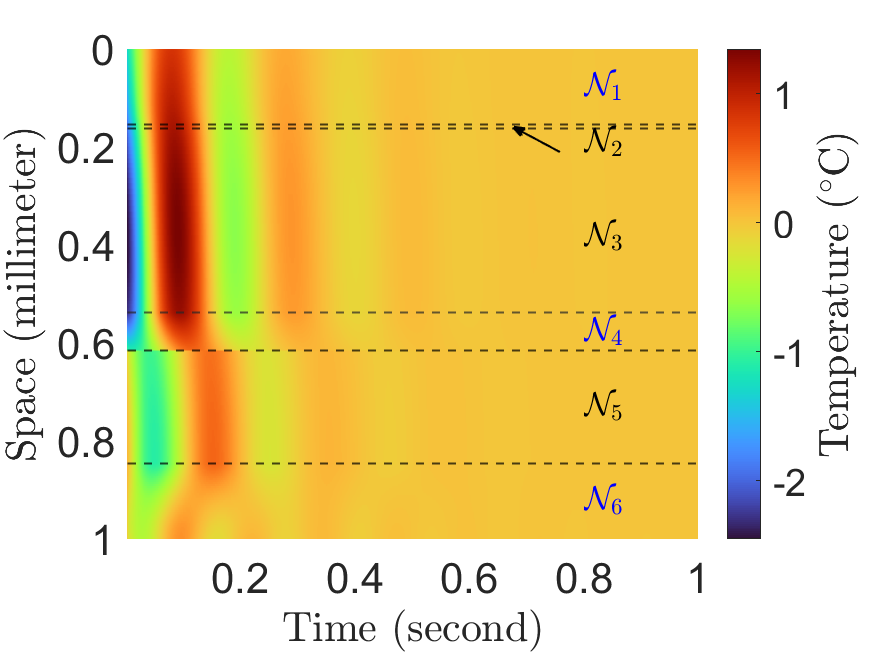}
        \caption{ 5\% off in initial condition.}
        \label{est5}
    \end{subfigure}
    \begin{subfigure}[b]{0.5\columnwidth}
        \centering
        \includegraphics[width=\linewidth]{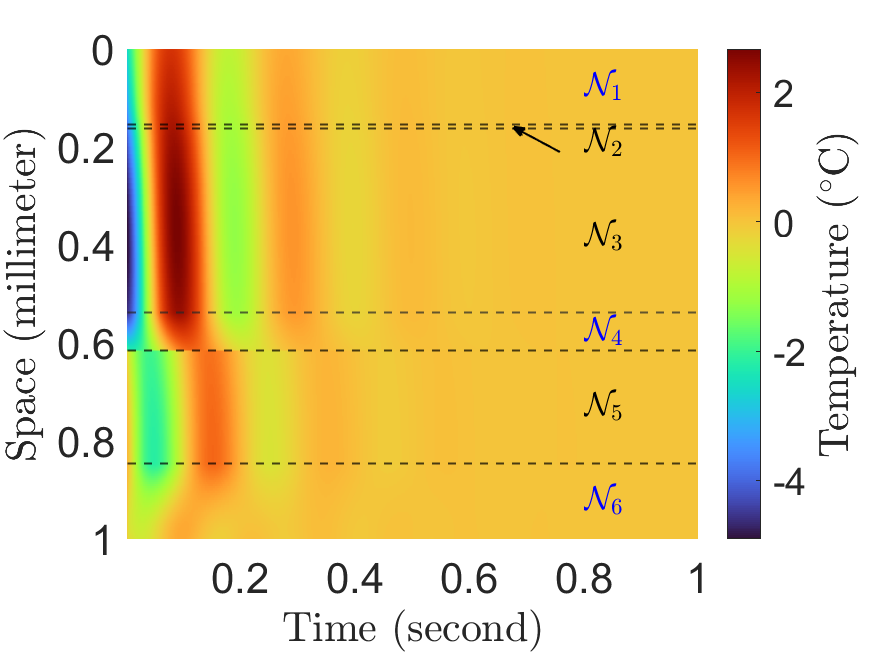} 
        \caption{ 10\% off in initial condition.}
        \label{est10}
    \end{subfigure}
    \begin{subfigure}[b]{0.5\columnwidth}
        \centering
        \includegraphics[width=\linewidth]{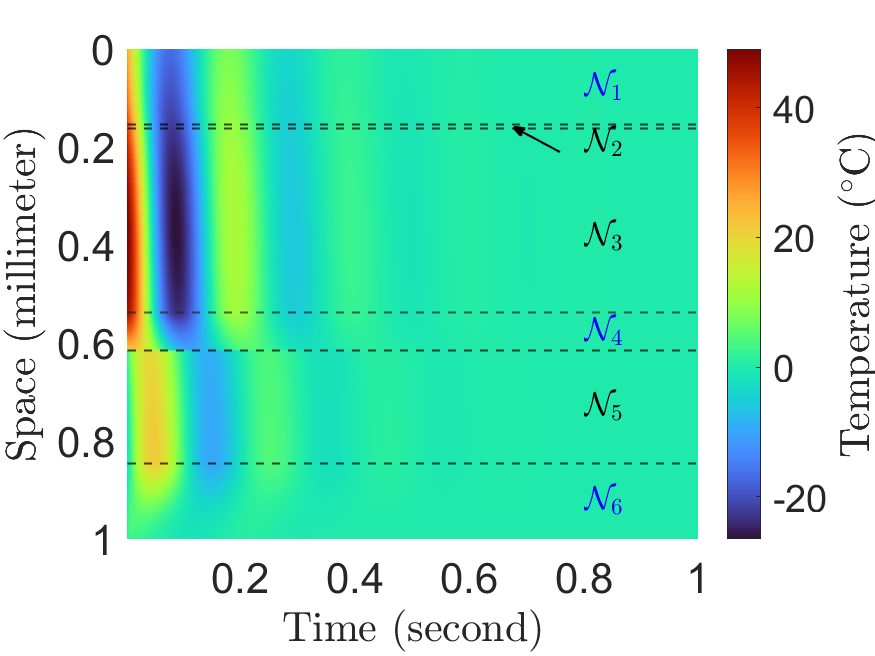} 
        \caption{ Zero initial condition.}
        \label{est0}
    \end{subfigure}
    \begin{subfigure}[b]{0.5\columnwidth}
        \centering
        \includegraphics[width=\linewidth]{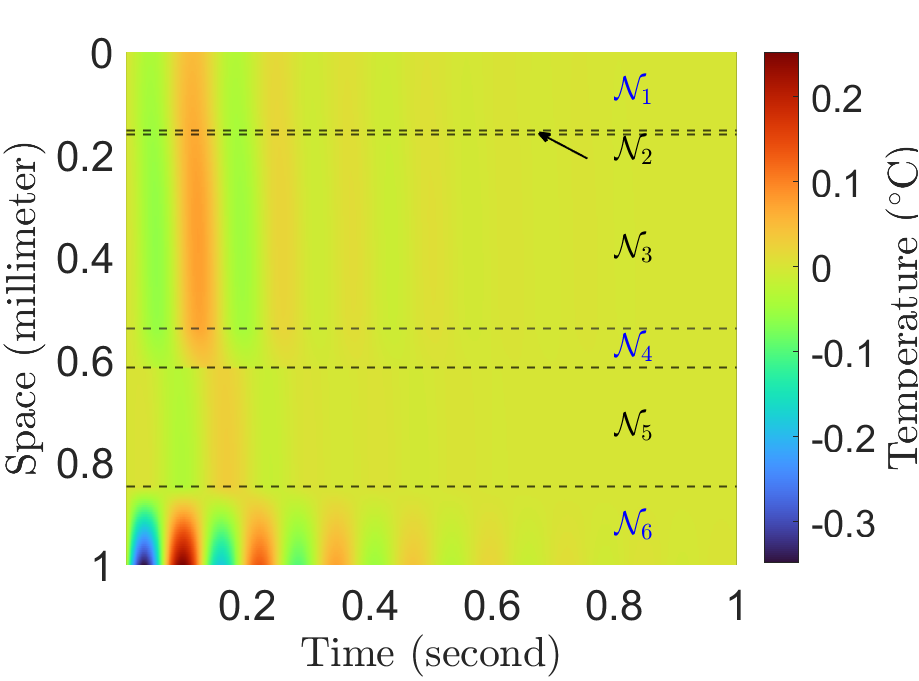}
        \caption{ Same initial condition.}
        \label{estsameic}
    \end{subfigure}
    \caption{The spatio-temporal variation of the error system with various offsets in the initial condition of the estimator with respect to the system and the maximum errors.}
    \label{fig:thrust_results}
\end{figure*}
Figures \ref{estsameic}, \ref{est5}, \ref{est10}, and \ref{est0} show the plots of the error system over the entire spatial domain simulated for 50 seconds. The space axis of all these plots is normalised in the range [0, 1]. \\
The maximum error occurs when the initial condition of the estimator is known in $\mathcal{N}_{6}$, while for uncertain initial conditions, the maximum error is observed in $\mathcal{N}_{3}$. This could be because, for the estimator with the known initial condition, the effect of the disturbance on the node $\mathcal{N}_{6}$ is felt with a delay because the heat has to diffuse through the spatial domain of all nodes and therefore would take time to attenuate it. However, when the initial condition of the estimator is uncertain, the estimator has to compensate for the mismatch in the initial condition along with the attenuation of the disturbance. Hence, this could be the reason why the maximum error occurs in $\mathcal{N}_{3}$.

% \subsection{\textcolor{red}{Implementation of the $\mathcal{H}_{\infty}$ estimator}}
% The $\mathcal{H}_{\infty}$ estimator can be implemented by first transforming the partial differential equation (PDE) into a partial integral equation (PIE) form, followed by discretisation. The resulting system of discretised PIEs forms a set of ordinary differential equations (ODEs), which can be solved using an appropriate numerical solver. Since the majority of the computational effort lies in converting the PDE to the PIE form and performing the discretisations, these steps are typically precomputed. Subsequently, the discretised PIEs, which correspond to the ODEs, can be directly implemented for solving the problem.

\begin{rem}
The developed state estimator is not directly applicable for the PIE model \eqref{def_sol_funda}. In fact, the function $\Delta$ is quite complex and the development of a robust or parameter-varying state estimator to explicitly include nonlinearities remains an open problem and task for future research. On the other hand, a conservative robust state estimator can be synthesized by assuming no underlying structure of $\Delta$, i.e.,  by ignoring the relation $q = \Delta(p)$ and \eqref{def_sol_funda} becomes
\begin{align}
     \mathscr{T} \dot{v} +  \mathscr{T}_w \dot{w} +  \mathscr{T}_d \dot{d} &=\mathscr{A} v +  \mathscr{B}_1 w +  \mathscr{B}_2 d+ \mathscr{B}_q q,\notag\\
     \bmat{z\\y\\p} &= \bmat{\mathscr{C}_1&\mathscr{D}_{11}&\mathscr{D}_{12}&\mathscr{D}_{1q}\\\mathscr{C}_2&\mathscr{D}_{21}&\mathscr{D}_{22}&\mathscr{D}_{2q}\\\mathscr{C}_p&\mathscr{D}_{p1}&\mathscr{D}_{p2}&\mathscr{D}_{pq}} \bmat{v\\w\\d\\q}.
\end{align}
Thus, the above PIE system would represent every possible uncertainty and non-linearity. Now, a similar $\mathcal{H}_{\infty}$ state estimator synthesis can be repeated by considering additional channels $(p, q)$ in addition to the channels $(w, z_e)$. However, the performance of this new observer will be conservative and will not meet the tightness requirement of the industry.
\end{rem}
% \section{Simulation results, validation and interpretation} \label{result}
% \input{Chapters/results/Simulation results and validation}
\section{Conclusions}
% In this study, a flexible, versatile, and modular digital twin was created using the PIE framework with a graph theoretical representation.  The modularity, scalability, flexibility and versatility of the digital twin are demonstrated by the ability to define different kinds of linear boundary conditions, incorporate the non-linearities in the process, and the ability to approximate when there is a discontinuity of the initial conditions with the boundary conditions by defining virtual spatially distributed nodes, without compromising on the qualities of the digital twin. An $\mathcal{H}_{\infty}$ optimal state estimator was synthesized for estimating the thermal effects of fixation on a blank sheet of paper to demonstrate the utility of the developed twin in improving the practical operation of the fixation unit.
% . Additionally, for the first time, this was implemented for an industrial application and validated across various test cases, considering non-linear phenomena such as evaporation. The validation of the simulation model against the sensor readings shows the reliability of the digital twin.
In this study, a modular digital twin is created that can effectively simulate the thermo-fluidic process in the fixation process and is compared with the sensor readings of the fixation with desirable prediction accuracy. Furthermore, an $\mathcal{H}_\infty$ state estimator is synthesized for the infinite-dimensional thermo-fluidic process, which can efficiently estimate the required thermal states, with a guaranteed robustness for worst-case external perturbation. This synthesis process is computable thanks to the Partial Integral Equation framework which provides quantified certificate of robustness of digital twin's performance during nominal printing operations. 

The current model considers evaporation as the primary non-linearity and condensation during fixation is not included, which can be further incorporated within this framework. Neumann boundaries cannot be imposed on both spatial ends due to the invertibility constraint of the $B_T$
matrix, but enabling this could simplify the model. The current digital twin supports only one-dimensional diffusion–transport–reaction; extending it to two dimensions would enhance accuracy, especially for belts with high thermal conductivity.
% \section{Future work and recommendation} \label{future work}
% \input{Chapters/Future work and recommendation}
% \begin{ack}
% Place acknowledgments here.
% \end{ack}

% \section*{DECLARATION OF GENERATIVE AI AND AI-ASSISTED TECHNOLOGIES IN THE WRITING PROCESS}
% During the preparation of this work the author(s) used [NAME TOOL / SERVICE] in order to [REASON]. After using this tool/service, the author(s) reviewed and edited the content as needed and take(s) full responsibility for the content of the publication.
\nobalance
\bibliography{reference}             % bib file to produce the bibliography
                                                     % with bibtex (preferred)
                                                   
%\begin{thebibliography}{xx}  % you can also add the bibliography by hand

%\bibitem[Able(1956)]{Abl:56}
%B.C. Able.
%\newblock Nucleic acid content of microscope.
%\newblock \emph{Nature}, 135:\penalty0 7--9, 1956.

%\bibitem[Able et~al.(1954)Able, Tagg, and Rush]{AbTaRu:54}
%B.C. Able, R.A. Tagg, and M.~Rush.
%\newblock Enzyme-catalyzed cellular transanimations.
%\newblock In A.F. Round, editor, \emph{Advances in Enzymology}, volume~2, pages
%  125--247. Academic Press, New York, 3rd edition, 1954.

%\bibitem[Keohane(1958)]{Keo:58}
%R.~Keohane.
%\newblock \emph{Power and Interdependence: World Politics in Transitions}.
%\newblock Little, Brown \& Co., Boston, 1958.

%\bibitem[Powers(1985)]{Pow:85}
%T.~Powers.
%\newblock Is there a way out?
%\newblock \emph{Harpers}, pages 35--47, June 1985.

%\bibitem[Soukhanov(1992)]{Heritage:92}
%A.~H. Soukhanov, editor.
%\newblock \emph{{The American Heritage. Dictionary of the American Language}}.
%\newblock Houghton Mifflin Company, 1992.

%\end{thebibliography}

% \appendix
% \section{A summary of Latin grammar}    % Each appendix must have a short title.
% \section{Some Latin vocabulary}              % Sections and subsections are supported  
                                                                         % in the appendices.
\end{document}